\newtheorem{theorem}{Theorem}[section]
\newtheorem{lemma}[theorem]{Lemma}
\newtheorem{corollary}[theorem]{Corollary}
\newtheorem{definition}[theorem]{Definition}
\newtheorem{proposition}[theorem]{Proposition}
\newcommand{\Dm}{{\mathbb D}}
\newcommand{\Hm}{{\mathbb H}}
\newcommand{\Rm}{{\mathbb R}}
\newcommand{\R}{{\mathbb R}} 
\newcommand{\Sm}{{\mathbb S}}
\newcommand{\Tm}{{\mathbb T}}
\newcommand{\Z}{{\mathbb Z}}
\newcommand{\gp}{G} 
\newcommand{\e}{\mathbbm{1}_{\gp}\xspace} 
\newcommand{\is}{g} 
\newcommand{\sur}{{\mathcal{M}_2}} 
\newcommand{\lift}[1]{\widetilde{#1}} 
\newcommand{\rep}[1]{{{\widetilde{#1}}^0}} 
\newcommand{\F}{{\Omega^0}} 
\newcommand{\proj}{{\rho}}
\newcommand{\pb}[1]{{\proj^{-1}(#1)}} 
\newcommand{\diam}{{\Delta}} 
\newcommand{\cF}{{\mathcal F}}
\newcommand{\cgal}{{\sc cgal}}
\definecolor{dgreen}{RGB}{0,150,0}
\title{Flipping Geometric Triangulations on Hyperbolic Surfaces}\thanks{The authors were supported by the grant(s)
    ANR-17-CE40-0033 of the French National Research Agency ANR (project SoS)
    and
    INTER/ANR/16/11554412/SoS of the Luxembourg National Research fund
    FNR ({\small\url{https://members.loria.fr/Monique.Teillaud/collab/SoS/}}).}
\date{\today}
\author{Vincent Despré}
\address{Université de Lorraine, CNRS, Inria, LORIA, F-54000 Nancy, France}
\email{vincent.despre@loria.fr}
\author{Jean-Marc Schlenker}
\address{Department of Mathematics, University of Luxembourg, Luxembourg}
\email{jean-marc.schlenker@uni.lu}
\author{Monique Teillaud}
\address{Université de Lorraine, CNRS, Inria, LORIA, F-54000 Nancy, France} 
\email{monique.teillaud@inria.fr}
\begin{document}
\maketitle

\begin{abstract}
We consider geometric triangulations of surfaces, i.e., triangulations
whose edges can be realized by disjoint locally geodesic segments. We prove
that the flip graph of geometric triangulations with fixed vertices of
a flat torus or a closed hyperbolic surface is connected. We give
upper bounds on the number of edge flips that are necessary to
transform any geometric triangulation on such a surface into a
Delaunay triangulation.
\end{abstract}

\section{Introduction}
In this paper, we investigate triangulations of two categories of surfaces: flat tori, i.e., surfaces of genus 1 with a locally Euclidean metric, and hyperbolic surfaces, i.e., surfaces of genus at least 2 with a locally hyperbolic metric (these surfaces will be introduced more formally in Section~\ref{sec:surface}). 

Triangulations of surfaces can be considered in a purely topological manner: a triangulation of a surface is a graph whose vertices, edges and faces partition the surface and whose faces have three (non-necessarily distinct) vertices. However, when the surface is equipped with a
Euclidean or hyperbolic structure, it is possible to consider {\em geometric} triangulations, i.e., triangulations whose edges can be realized as interior disjoint locally geodesic segments (Definition~\ref{def:geometric}). Note that a geometric triangulation can still have loops and multiple edges, but no contractible loop and no contractible cycle formed of two edges.
We will prove that any Delaunay triangulation (Definition~\ref{def:Delaunay}) of the considered surfaces is geometric (Proposition~\ref{pr:geometric}). 

The flip graph of triangulations of the Euclidean plane has been well studied. It is known to be connected; moreover the number of edge flips that are needed to transform any given triangulation with $n$ vertices in the plane into the Delaunay triangulation has complexity $\Theta(n^2)$~\cite{ferran}. We are interested in generalizations on this result to surfaces. Flips in triangulations of surfaces will be defined precisely later (Definition~\ref{def:flip}), for now we can just think of them as similar to edge flips in triangulations of the Euclidean plane. We emphasize that geodesics only locally minimize the length, so the edges of a geometric triangulation are generally not shortest paths. We will prove that the number of geometric triangulations on a set of points can be infinite, whereas the flip graph of "shortest path" triangulations is small but not connected in most situations~\cite{cghmsv-2010}. 

\begin{definition}\label{def:flip-graph}
Let $(\sur,h)$ be either a torus $(\Tm^2,h)$ equipped with a Euclidean structure $h$ or a closed oriented surface $(S,h)$ equipped with a hyperbolic structure $h$. Let $V\subset\sur$ be a set of $n$ points. 
The {\em geometric flip graph} $\cF_{\sur,h,V}$ of $(\sur,h,V)$ is the graph whose vertices are the geometric triangulations of $(\sur,h)$ with vertex set $V$ and where two vertices are connected by an edge if and only if the corresponding triangulations are related by a flip. 
\end{definition}

Our results are mainly interesting in the hyperbolic setting, which is richer than the flat setting. However, to help the readers' intuition, we also present them for flat tori, where they are slightly simpler to prove and might even be considered as folklore. The geometric flip graph is known to be connected for the special case of flat surfaces with conical singularities and triangulations whose vertices are these singularities~\cite{t-gtf-19}.

The main results of this paper are:
\begin{itemize}
\item The geometric flip graph of $(\sur,h,V)$ is connected (Theorems~\ref{th:flip-torus} and~\ref{th:flip-hyp}).
\item The Delaunay triangulation can be reached from any geometric triangulation  by a path in the geometric flip graph $\cF_{\sur,h,V}$ whose length is bounded by $n^2$ times a quantity measuring the \emph{quality} of the input triangulation (Theorems~\ref{tm:bound-torus} and~\ref{tm:bound-hyp}). 
\end{itemize}
If an initial triangulation of the surface only having one vertex is given, then the Delaunay triangulation can thus be computed incrementally by inserting points one by one in a very standard way: for each new point, the triangle containing it is split into three, then the Delaunay property is restored by propagating flips. This approach, based on flips, can handle triangulations of a surface with loops and multiarcs, which is not the case for the approach based on Bowyer's incremental algorithm~\cite{ct-dtced-16,btv-dtosl-16}. 
The work presented here can hardly be compared with broad results on computing Delaunay triangulations on very general manifolds~\cite{jdandco}.

\section{Background and notation}

\subsection{Surfaces}\label{sec:surface}
In this section, we first recall a few notions, then we illustrate them for the two classes of surfaces (flat tori and hyperbolic surfaces) that we are interested in. 

Let $\sur$ be a closed oriented surface, i.e., a compact connected oriented 2-manifold without boundary. 
There is a unique simply connected surface $\lift{\sur}$, called the {\em universal cover} of $\sur$, equipped
with a projection $\proj:\lift{\sur}\to \sur$ that is a local  diffeomorphism.
There is a natural action on $\lift{\sur}$ of the fundamental group $\pi_1(\sur)$ of $\sur$ so that for all $p\in \sur$, $\proj^{-1}(p)$ is an orbit under the action of $\pi_1(\sur)$. We will denote as $\lift{p}$ a {\em lift} of $p$, i.e., one of the elements of the orbit $\proj^{-1}(p)$.
A {\em fundamental domain} in $\lift{\sur}$ for the action of $\pi_1(\sur)$ on $\lift{\sur}$ is a connected subset $\Omega$ of $\lift{\sur}$ that intersects each orbit in exactly one point, or, equivalently, such that the restriction of $\proj$ to $\Omega$ is a bijection from $\Omega$ to $\sur$~\cite{massey}. 
The genus $g$ of $\sur$ is its number of handles. In this paper, we consider surfaces with constant curvature ($0$ or $-1$). The value of the curvature is given by Gauss-Bonnet Theorem and thus only depends on the genus: a surface of genus $0$ only admits spherical structures (not considered here); a flat torus is a surface of genus $1$ and admits Euclidean structures; a surface of genus $2$ and above admits only hyperbolic structures (see below).

From now on, $\sur$ will denote either a flat torus or a closed hyperbolic surface. 

\paragraph*{Flat tori.}
We denote by $\Tm^2$ the topological torus, that is, the product $\Tm^2=\Sm^1\times \Sm^1$ of two copies of the circle. Flat tori are obtained by taking the quotient of the Euclidean plane by an Abelian group generated by two independent translations. There are in fact many different Euclidean structures on $\Tm^2$; if one considers Euclidean structures up to homothety -- which is sufficient for our purposes here -- a Euclidean structure is uniquely determined by a vector $u$ in the upper half-plane $\R\times \R_{>0}$: to such a vector $u$ is associated the Euclidean structure
$ (\Tm^2, h_u)\sim \R^2/(\Z e_1+\Z u)~,$ where $e_1=(1,0)$ and $u=(u_x,u_y)\in \R^2$ is linearly independent from $e_1$.  The orbit of a point of the plane is a lattice. The area $A_h$ of the surface is $|u_y|$. 
The plane $\R^2$, equipped with the Euclidean metric, is then isometric to the universal cover of the corresponding quotient surface.

\paragraph*{Hyperbolic surfaces.}
We now consider a closed oriented surface $S$ (a compact oriented surface without boundary) of genus $g\geq 2$. Such a surface does not admit any Euclidean structure, but it admits many {\em hyperbolic} structures, corresponding to metrics of constant curvature $-1$, locally modeled on the hyperbolic plane $\Hm^2$. 
Given a hyperbolic structure $h$ on $S$, the surface $(S,h)$ is isometric to the quotient $\Hm^2/\gp$, where $\gp$ is a (non-Abelian) discrete subgroup of the isometry group $PSL(2,\R)$ of $\Hm^2$ isomorphic to the fundamental group $\pi_1(S)$. The universal cover $\lift{S}$
is isometric to the hyperbolic plane $\Hm^2$. 

For completeness, we recall below some properties of the hyperbolic plane.

\subsection{The Poincaré disk model of the hyperbolic plane} 
\label{sec:hyperb}

In the Poincaré disk model~\cite{berger}, the hyperbolic plane is represented as the open unit disk $\Dm^2$ of $\R^2$. The points on the unit circle represent points at infinity. The geodesic lines consist of circular arcs contained in the disk $\Dm^2$ and that are orthogonal to its boundary (Figure~\ref{fig:poincare} (left)). The model is conformal, i.e., the Euclidean angles measured in the plane are equal to the hyperbolic angles. 

We won't need the exact expression of the hyperbolic metric here.
However, the notion of hyperbolic circle is relevant to us. Three non-collinear points in the hyperbolic plane $\Hm^2$ determine a {\em circle}, which is the restriction to the Poincaré disk of a Euclidean circle or line. If $C$ is a Euclidean circle or line and $\phi:\Dm^2\to \Dm^2$ is an isometry of the hyperbolic plane, then $\phi(C\cap \Dm^2)$ is still the intersection with $\Dm^2$ of a Euclidean circle or a line.

A key difference with the Euclidean case is that the ``circle'' defined by 3 non-collinear points in $\Hm^2$ is generally not compact (i.e., it is not included in the Poincaré disk). The compact circles are sets of points at constant (hyperbolic) distance from a point. Non-compact circles are either horocycles or hypercycles, i.e., connected components of the set of points at constant (hyperbolic) distance from a hyperbolic line (Figure~\ref{fig:poincare} (right))~\cite{gardner}.\footnote{A synthetic presentation can be found at \url{http://en.wikipedia.org/wiki/Hypercycle_(geometry)}}
Therefore, the relatively elementary tools that can be used for flat tori must be refined for hyperbolic surfaces. Still, some basic properties of circles still hold for non-compact circles. A non-compact circle splits the hyperbolic plane into two connected regions. We will call \emph{disk}  the region of the corresponding Euclidean disk 
that lies in the Poincaré disk. When a non-compact circle is determined by the three vertices of a triangle, its associated disk is convex (in the hyperbolic sense) and contains the whole triangle. 

\begin{figure}[htb]
    \centering
    \includegraphics{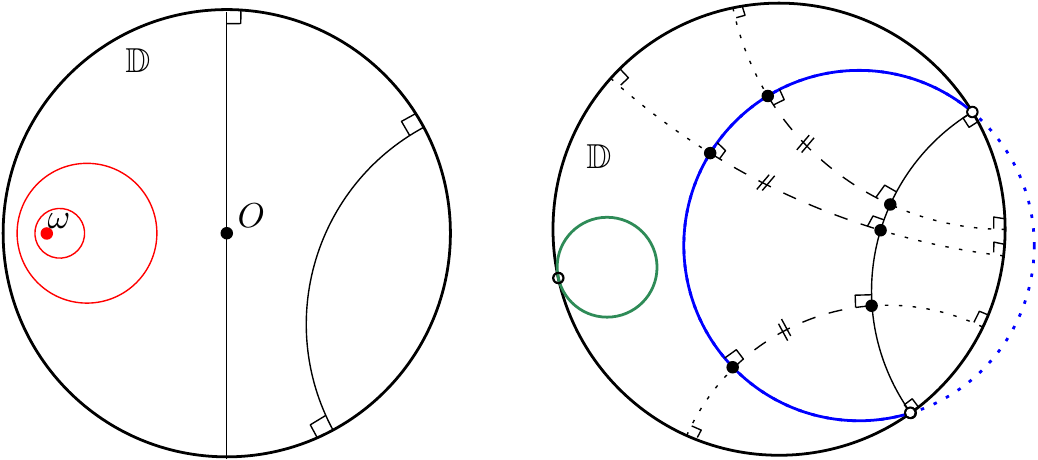}
    \caption{The Poincaré disk. Left: Geodesic lines (black) and compact circles (red) centered at point $\omega$. Right: A horocycle (green). A hypercycle (blue), whose points have constant distance from the black geodesic line.}
    \label{fig:poincare} 
\end{figure} 

Triangulations of hyperbolic spaces have been studied~\cite{bdt-hdcvd-14} and implemented in \cgal\ in 2D~\cite{cgal-bit-ht2}. Note that that previous work was not considering non-compact circles as circles.

\subsection{Triangulations on surfaces}\label{sec:delaunay}
Let $(\sur,h)$ be either a torus $(\Tm^2,h)$ equipped with a Euclidean structure $h$ or a closed surface $(S,h)$ equipped with a hyperbolic structure $h$. Let $V\subset\sur$ be a finite subset of points, and let $T$ be a triangulation of $\sur$ with vertex set $V$.

Recall that given two distinct points $v, w\in \sur$, any homotopy class of paths on $\sur$ with endpoints $v$ and $w$ contains a unique locally geodesic segment. We can recall the following simple notion of geometric triangulation.

\begin{definition}\label{def:geometric} A triangulation $T$ on $\sur$ is said to be {\em geometric} for $h$ if it can be realized with interior disjoint locally geodesic segments as edges.  
\end{definition}

If $T$ is a triangulation of $\sur$, its inverse image\footnote{the notion of \emph{pull-back} would be more correct but we stay with inverse image for simplicity} $\pb{T}$ is the (infinite) triangulation of $\lift{\sur}$ with vertices, edges and faces that are connected components of the lifted images by $\proj^{-1}$ of the vertices, edges and faces of $T$.
\begin{definition}\label{def:diam}
The {\em diameter} $\diam(T)$ of $T$ is the smallest diameter of a fundamental domain that is the union of lifts of the triangles of $T$ (with geodesic edges) in $\lift{\sur}$.
\end{definition}
The diameter $\diam(T)$ is not smaller than the diameter of $(S,h)$. It is unclear how to compute $\diam(T)$ algorithmically and the problem looks difficult. However bounds are easy to obtain: $\diam(T)$ is at least equal to the maximum of the diameters of the triangles of $\pb{T}$ in $\lift{\sur}$ and is at most the sum of the diameters of these triangles.

\begin{definition} \label{def:Delaunay} We say that a triangulation $T$ of $\sur$ is a {\em Delaunay} triangulation if for each face $f$ of $T$ and any face $\lift{f}$ of $\pb{T}$, there exists an open disk in $\lift{\sur}$ inscribing $\lift{f}$ that is \emph{empty}, i.e., that contains no vertex of $\pb{T}$.
\end{definition}
We will see in Section~\ref{sec:geotri} that any Delaunay triangulation of $\sur$ is geometric.

Remark that, even for a hyperbolic surface, every empty disk in the universal cover $\Hm^2$ is compact. Indeed, any non-compact disk contains at least one disk of any diameter, so, at least one disk of diameter $\diam(T)$, thus it contains a fundamental domain (actually, infinitely many fundamental domains) and cannot be empty. 

Let us now give a natural definition for flips in triangulations of surfaces. It is based on the usual notion of flips in the Euclidean plane. 
\begin{definition}\label{def:flip} Let $T$ be a triangulation of $\sur$. Let $(v_1,v_2,v_3)$ and $(v_2,v_1,v_4)$ be two adjacent triangles in $T$, sharing the edge $e=(v_1,v_2)$. Let us lift the quadrilateral $(v_1,v_2,v_3,v_4)$ to $\lift{\sur}$ so that $(\lift{v_1},\lift{v_2},\lift{v_3})$ and $(\lift{v_2},\lift{v_1},\lift{v_4})$ form two adjacent triangles of $\pb{T}$ sharing the edge $\lift{e}=(\lift{v_1},\lift{v_2})$. 

Flipping $e$ in $T$ consists in replacing the diagonal $\lift{e}$ in the quadrilateral $(\lift{v_1},\lift{v_2},\lift{v_3}, \lift{v_4})$ (which lies in $\lift{\sur}$, i.e., $\Rm^2$ or $\Hm^2$) by the other diagonal $(\lift{v_3},\lift{v_4})$, then projecting the two new triangles $(\lift{v_3},\lift{v_4},\lift{v_2})$ and $(\lift{v_4},\lift{v_3},\lift{v_1})$ to $\sur$ by $\proj$. 

We say that the flip of $T$ along $e$ is {\em Delaunay} if the triangulation is \emph{locally Delaunay} in the quadrilateral after the flip, i.e., the disk inscribing $(\lift{v_3},\lift{v_4},\lift{v_2})$ does not contain $\lift{v_1}$ (and the disk inscribing $(\lift{v_4},\lift{v_3},\lift{v_1})$ does not contain $\lift{v_2}$). 

An edge $e$ is said to be {\em Delaunay flippable} if the flip along $e$ is Delaunay.
\end{definition}
Note that even if $T$ is geometric, the triangulation after a flip is not necessarily geometric.
We will prove later (Lemma~\ref{lm:flip-geom}) that a Delaunay flip transforms a geometric triangulation into a geometric triangulation. 

\paragraph*{Degenerate sets of points on a surface.} Let us quickly examine here the case of degenerate sets of points, i.e., sets of points $V$ on $\sur$ such that the infinite Delaunay triangulation of $\proj^{-1}(V)$ is not unique, i.e., at least two adjacent triangles in the possible Delaunay triangulations of $\proj^{-1}(V)$ in $\lift{\sur}$ have cocircular vertices. 
In such a case, any triangulation of the subset $\mathcal C$ of $\proj^{-1}(V)$ consisting of $c$ cocircular points is a Delaunay triangulation. Any of these triangulations can be transformed in any other by $O(c)$ flips~\cite{ferran}. From now on, we can thus assume that the set of points $V$ on the surfaces that we consider is always non-degenerate.

\paragraph*{Triangulations and polyhedral surfaces.}
The Euclidean plane can be identified with the plane $(z=1)$ in $\R^3$, while the Poincaré model of the hyperbolic plane can be identified with the unit disk in that plane. We can now use the stereographic projection $\sigma:\Sm^2\setminus \{ s_0\}\to \Rm^2$ to send the unit sphere $\Sm^2$ to this plane $(z=1)$, where $s_0=(0,0,-1)$ is the pole. In this projection, each point $p\neq s_0$ on the sphere is sent to the unique intersection with the plane $(z=1)$ of the line going through $s_0$ and $p$. The inverse image of the plane $(z=1)$ is $\Sm^2\setminus \{ s_0\}$, while the inverse image of the disk containing the Poincaré model of the hyperbolic plane is a disk, which is the set of points of $\Sm^2$ above a horizontal plane.

Let $T^{\star}$ be a triangulation of the Euclidean or the hyperbolic plane -- for instance, $T^{\star}$ could be the inverse image $\pb{T}$ of a triangulation $T$ of a surface $(\sur,h)$, in which case $T^{\star}$ has infinitely many vertices. 
We associate to $T^{\star}$ a polyhedral surface $\Sigma$ in $\Rm^3$, constructed as follows. The construction is similar to the classic duality originally presented with a paraboloid in the case of (finite) triangulations in a Euclidean space~\cite{es-vda-86}. It can also be seen as a simpler version, sufficient for our purpose, of the construction presented for triangulations in hyperbolic spaces using the space of spheres~\cite{bdt-hdcvd-14}. 
\begin{itemize}
\item The vertices of $\Sigma$ are the inverse images on $\S^2$ by $\sigma$ of the vertices of $T^{\star}$.
\item The edges of $\Sigma$ are line segments in $\R^3$ corresponding to the edges of $T^{\star}$ and the faces of $\Sigma$ are triangles in $\R^3$ corresponding to the faces of $T^{\star}$. 
\end{itemize}
Note that $\Sigma$ is not necessarily convex. We can make the following well-known remarks. Let $t_1$ and $t_2$ be two triangles of $T^{\star}$ sharing an edge ${e}$, and let $t^\Sigma_1$ and $t^\Sigma_2$ be corresponding faces of the polyhedral surface $\Sigma$, sharing the edge $e^\Sigma$. Then $\Sigma$ is concave at $e^\Sigma$ if and only if $e$ is Delaunay flippable. Flipping $e$ in the triangulation $T^{\star}$ in the plane corresponds to replacing the two faces $t^\Sigma_1$ and $t^\Sigma_2$ of $\Sigma$ by the two other faces of the tetrahedron formed by their vertices. That tetrahedron lies between $\Sigma$ and $\Sm^2$. We obtain a new edge $e^{\Sigma'}$ at which the new polyhedral surface $\Sigma'$ is convex, and which is strictly closer to $\Sm^2$ than $\Sigma$. By an abuse of language, we will say that $\Sigma'$ \emph{contains} $\Sigma$, which we will denote as $\Sigma\subset\Sigma'$.

As a consequence, $\Sigma$ is convex if and only if $T^{\star}$ is Delaunay.

There is a direct corollary of this statement: Given a (non-degenerate, see above) discrete set $V$ of points in $\R^2$ or $\Hm^2$, there is a unique Delaunay triangulation with this set of vertices. 

However we are going to see in the next two sections that there can be infinitely many geometric (non-Delaunay) triangulations on a surface, with the same given finite vertex set.

\section{Geometric triangulations of surfaces} \label{sec:geotri}

We consider now Dehn twists, which are usually considered as acting on the space of metrics on a surface~\cite{casson-bleiler}, but are defined here equivalently, for simplicity, as acting on triangulations of a closed oriented surface $(\sur,h)$ equipped with a fixed Euclidean or hyperbolic structure (figures in this section illustrate the flat case, but the results are proved for both flat and hyperbolic cases). 
Let $T$ be a triangulation of $(\sur,h)$, with vertex set $V$, and let $c$ be an oriented homotopically non-trivial simple closed curve on $\sur\setminus V$. We define a new triangulation $\tau_c(T)$ of $\sur$ by performing a {\em Dehn twist} along $c$: whenever an edge $e$ of $T$ intersects $c$ at a point $p$, we orient $e$ so that the unit vectors of the tangent plane along $e$ and $c$ form a positively oriented basis (see Figure~\ref{fig:dehn_tore} (left)), and then replace $e$ by the oriented path following $e$ until $p$, then following $c$ until it comes back to $p$, then following $e$ until its endpoint (see Figure~\ref{fig:dehn_tore} (right)). This defines a map $\tau_c$ from the space of triangulations of $\Tm^2$ with vertex set $V$ to itself. Note that, even if $T$ is a geometric triangulation, $\tau_c(T)$ is not necessarily geometric. If we denote by $-c$ the curve $c$ with the opposite orientation, then one easily checks that $\tau_{-c}=\tau_c^{-1}$.
\begin{figure}[htb]
    \centering
    \includegraphics{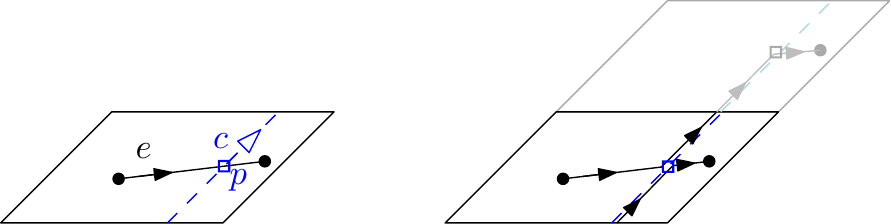}
    \caption{Transformation of an edge $e$ by the Dehn twist along $c$ on a flat torus $\Tm^2$. Here the black parallelepiped is a fundamental domain, and the gray one, used for the construction of the image of $e$ by $\tau_c$, is another fundamental domain, image through an element of the the group  $\Gamma$ of isometries.}
    \label{fig:dehn_tore}
\end{figure}

\begin{lemma}
  There exists a geometric triangulation $T$ of $(\sur,h)$ and a simple closed curve $c\subset \sur$ such that for all $k\in \Z$, $\tau_c^k(T)$ is geometric.
\end{lemma}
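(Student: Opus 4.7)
The strategy is to exhibit in each setting an explicit triangulation $T$ and a simple closed curve $c$ for which all iterated Dehn twists $\tau_c^k(T)$ are geometric. The flat torus case will be handled by direct computation in the lattice $\Z^2$, and the hyperbolic case will follow the same principle but requires more care.

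For the flat torus, I would work on $(\Tm^2,h)=\R^2/\Z^2$ with a single vertex $V=\{v\}$ at the origin and the standard two-triangle triangulation $T$ whose three edges lift at $v$ to the vectors $(1,0)$, $(0,1)$ and $(1,1)$. I would take $c$ to be the horizontal closed geodesic at height $1/2$, which is disjoint from $v$. Only the edges in classes $(0,1)$ and $(1,1)$ cross $c$, so after $k$ twists the three edges become loops at $v$ whose lifts from the origin point to $(1,0)$, $(k,1)$ and $(k+1,1)$. To verify that this is a geometric triangulation for every $k$, I would check that the six vectors $\pm(1,0),\pm(k,1),\pm(k+1,1)$ are arranged around the origin in $\R^2$ in a cyclic order such that any two consecutive ones differ by a third vector in the list. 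This is a direct arctangent computation that holds uniformly in $k\in\Z$ and produces two embedded triangles per fundamental domain, hence a geometric triangulation on $\Tm^2$.

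For the hyperbolic case, I would follow the same principle: design $T$ so that the edges crossing $c$ are related by the deck transformation fixing a lift $\lift{c}\subset\Hm^2$, so that the Dehn twist merely shears those edges along the axis of $\lift{c}$. Concretely, pick a simple closed geodesic $c$ on $(S,h)$ and choose a small number of vertices placed symmetrically with respect to $\lift{c}$ in $\Hm^2$, so that near $c$ the triangulation consists of two triangles sitting on either side of $\lift{c}$ whose shared edges are exchanged by a hyperbolic translation along $\lift{c}$. The remaining triangles of $T$ can be chosen arbitrarily, far from $c$, without being affected by $\tau_c$.

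The main obstacle is the geometric verification for hyperbolic surfaces. In the flat case the cyclic-order argument is elementary; in the hyperbolic case the geodesic representatives of the twisted edges spiral toward the axis of $\lift{c}$ as $|k|\to\infty$, and showing they stay pairwise disjoint uses the convexity of the disks circumscribing the triangles (valid even for non-compact circles, by the discussion of Section~\ref{sec:hyperb}) together with the fact that translation along the axis of $\lift{c}$ is an isometry. A particularly symmetric initial $(S,h)$, for instance a genus-2 surface built from a regular hyperbolic octagon with $c$ coming from an opposite-side identification, would make the configuration on either side of $c$ repeat isometrically after each twist, reducing the verification to a single finite hyperbolic computation.
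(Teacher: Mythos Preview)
Your flat-torus argument is correct and pleasantly explicit: the one-vertex triangulation of $\R^2/\Z^2$ with edge classes $(1,0),(0,1),(1,1)$ is indeed sent by the $k$-th power of the horizontal Dehn twist to the triangulation with edge classes $(1,0),(k,1),(k+1,1)$, and the cyclic-order check shows this is geometric for every $k$. This is a different route from the paper's, which gives a single uniform argument for both the flat and hyperbolic settings rather than a lattice computation, but for the torus your approach is perfectly adequate (and adapts to an arbitrary Euclidean structure $h_u$ by replacing $(1,0),(0,1)$ with $e_1,u$).

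The hyperbolic case, however, is not proved. You describe an intention (``follow the same principle'', ``place vertices symmetrically'', ``reduces to a single finite hyperbolic computation on a symmetric genus-2 surface'') but never carry out a verification, and two of the suggestions would not close the gap. First, appealing to the convexity of circumscribing disks is the mechanism behind the Delaunay-flip lemma, not this one: there is no reason the triangles on either side of $c$ should satisfy any empty-disk condition, and without it convexity of circumdisks tells you nothing about disjointness of the twisted edges. Second, restricting to a particularly symmetric $(S,h)$ is not allowed: $(\sur,h)$ is given, and you must produce $T$ and $c$ on that surface.

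What you are missing is a robust reason, valid on any hyperbolic surface, why the geodesic representatives of the twisted edges remain pairwise disjoint. The paper obtains this by constraining where the endpoints of the crossing edges sit: it takes two curves $c_-,c_+$ equidistant from the geodesic $c$, bounding a thin annulus, and chooses $T$ so that every edge meeting $c$ has one endpoint on $c_-$ and the other on $c_+$ and crosses $c$ exactly once. The Dehn twist then preserves the cyclic order of these endpoints on $c_-$ and on $c_+$, and since two geodesic segments in $\Hm^2$ can intersect at most once, segments whose endpoints alternate in the same order on the two boundary curves cannot cross. Edges disjoint from $c$ lie outside the annulus and are untouched. This order-plus-uniqueness-of-geodesics argument is the missing idea; once you have it, the induction on $k$ is immediate and no explicit hyperbolic computation is needed.
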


\begin{proof}
We choose a simple closed geodesic $c$ on $(\sur,h)$ and $\varepsilon>0$. We denote by $c_-, c_+$ the two geodesics at distance $\varepsilon$ from $c$ on the positive and negative sides of $c$. The value of $\varepsilon$ must be sufficiently small so that the region between $c_-$ and $c_+$ is an annulus drawn on $\sur$. We then choose a geometric triangulation $T$ of $(\sur,h)$ with no vertex in the open annulus bounded by $c_-$ and $c_+$ and containing $c$, such that each edge crossing $c$ intersects $c$ exactly once, and has one endpoint on $c_-$ and another on $c_+$.

We realize the image by $\tau_c$ of an edge $e$ of $T$ as a geodesic segment -- there is a unique choice in the homotopy class of the path described above (Figure~\ref{fig:dehn_tore_geodesic}).
\begin{figure}[htb]
    \centering
    \includegraphics{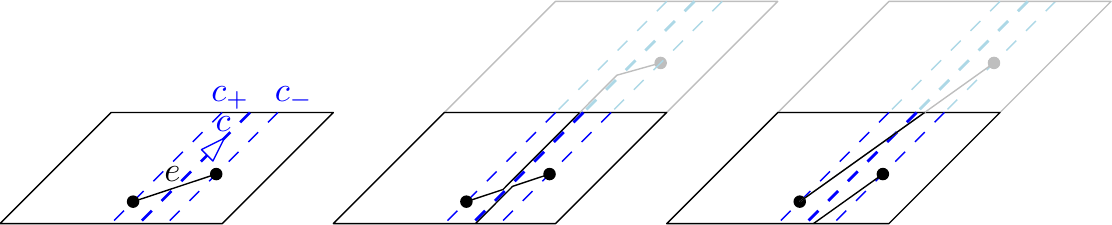}
    \caption{Image of $e$ by a Dehn twist (middle), realized as a geodesic edge (right).}
    \label{fig:dehn_tore_geodesic} 
\end{figure} 
Let $e,e'$ be two edges of $T$. If either $e$ or $e'$ does not intersect $c$, then their images by $\tau_c$ (or $\tau_{-c}$) remain disjoint, as they lie in different regions separated by $c_-$ and $c_+$.  If $e$ and $e'$ intersect $c$, then again their images by $\tau_c$ (or $\tau_{-c}$) remain disjoint, as their endpoints appear in the same order on $c_-$ and $c_+$ and two geodesic lines cannot intersect more than once (Figure~\ref{fig:dehn_tore_two}).  
\begin{figure}[htb]
    \centering
    \includegraphics{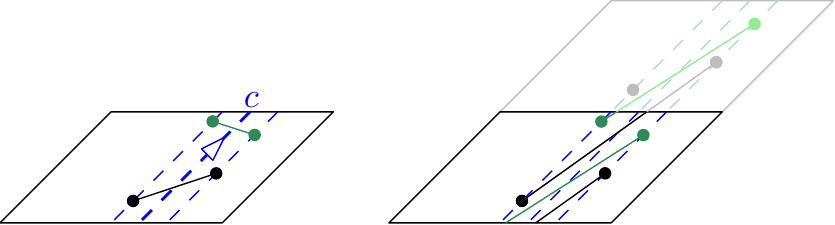}
    \caption{The Dehn twist of two edges along $c$ for two edges intersecting $c$.}
    \label{fig:dehn_tore_two}
\end{figure}
As a consequence, $\tau_c(T)$ (and $\tau_{-c}(T)$) are geometric. The same result follows by induction for $\tau_c^{k}(T)$ for any $k\in \Z$. 
\end{proof}

\begin{corollary}\label{cor:infinite}
  For any closed oriented surface $(\sur,h)$, there exists a finite set of points $V\subset\sur$ such that the graph of geometric triangulations with vertex set $V$ is infinite.
\end{corollary}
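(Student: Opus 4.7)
The plan is to deduce the corollary directly from the preceding lemma by taking $V$ to be the vertex set of the geometric triangulation $T$ it provides, and then showing that the infinitely many triangulations $\tau_c^k(T)$, $k\in\Z$, are pairwise distinct elements of the flip graph on $V$.

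First I would observe that a Dehn twist $\tau_c$ along a simple closed curve $c$ disjoint from $V$ modifies only the edges of $T$ and never its vertices; hence every $\tau_c^k(T)$ has the same vertex set $V$. By the lemma, each $\tau_c^k(T)$ is geometric, so all these triangulations are vertices of the geometric flip graph $\cF_{\sur,h,V}$.

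The key remaining step is to show that the map $k\mapsto \tau_c^k(T)$ is injective. For this I would pick an edge $e$ of $T$ that actually crosses $c$ (such an edge exists by construction of $T$ in the lemma). Each iterate $\tau_c^k(e)$ is realized as the unique geodesic representative in its homotopy class (relative to endpoints) on $\sur\setminus(V\setminus\{\text{endpoints of }e\})$. Since the Dehn twist acts by adding $k$ full turns along $c$ to the homotopy class of $e$, the classes of $\tau_c^k(e)$ are pairwise distinct: indeed, lifting to the universal cover $\lift{\sur}$, the endpoints of a lift of $\tau_c^k(e)$ differ from those of a lift of $e$ by the $k$-th power of the deck transformation corresponding to $c$, and these powers are all distinct in $\pi_1(\sur)$. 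Hence the geodesic realizations differ, and the triangulations $\tau_c^k(T)$ are pairwise distinct.

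The main (mild) obstacle is just this injectivity argument: one has to make sure that one does not accidentally collapse two distinct iterates. This is where one uses that $c$ is a homotopically non-trivial simple closed curve (as required in the definition of the Dehn twist), so that the associated element of $\pi_1(\sur)$ has infinite order and its powers act by distinct deck transformations on $\lift{\sur}$. Once this is established, the set $\{\tau_c^k(T)\mid k\in\Z\}$ provides an infinite family of geometric triangulations with the same vertex set $V$, showing that $\cF_{\sur,h,V}$ has infinitely many vertices.
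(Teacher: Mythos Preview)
Your proposal is correct and is precisely the argument the paper leaves implicit: the corollary is stated without proof immediately after the Dehn twist lemma, and your deduction---taking $V$ to be the vertex set of $T$, noting that each $\tau_c^k(T)$ is geometric on $V$, and checking injectivity via the infinite order of the deck transformation associated to $c$---is exactly what is intended. The only minor remark is that the relevant homotopy class of an edge is taken in $\sur$ relative to its endpoints (not in the complement of the other vertices), but this does not affect the argument.
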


We can now prove the following result:
\begin{proposition} \label{pr:geometric}
  Any Delaunay triangulation of a closed oriented surface $(\sur,h)$ is geometric.
\end{proposition}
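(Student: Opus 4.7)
The plan is to work in the universal cover. Write $T^\star=\pb{T}$ for the lift of the Delaunay triangulation. I would realize each edge of $T^\star$ as the geodesic segment joining its two endpoints in $\lift{\sur}$, and then show that this gives a genuine tessellation of $\lift{\sur}$ by interior-disjoint geodesic triangles. Projecting via $\proj$ will then yield the sought geometric realization of $T$ on $\sur$: since $\pi_1(\sur)$ acts on $\lift{\sur}$ by isometries preserving $T^\star$, the geodesic realization is equivariant and descends to a collection of interior-disjoint locally geodesic segments on $\sur$.

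First I verify that each face $\tilde f$ of $T^\star$ is a well-defined non-degenerate geodesic triangle. By Definition~\ref{def:Delaunay}, its three vertices lie on the boundary of an empty open disk $D\subset\lift{\sur}$, which, as observed after that definition, must be compact (otherwise it would contain a fundamental domain, hence a vertex of $T^\star$). The vertices are therefore non-collinear, and they bound a proper geodesic triangle contained in the convex disk $D$.

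The crux of the proof is to show that distinct geodesic faces of $T^\star$ have disjoint interiors. The most conceptual route uses the polyhedral surface $\Sigma$ from Section~\ref{sec:delaunay}: the Delaunay property makes $\Sigma$ convex, and this convexity forces the geodesic realizations of the faces of $T^\star$ to assemble without overlaps into a tiling of $\lift{\sur}$. Equivalently, one can argue by contradiction directly from the empty-disk characterization: if two geodesic edges of $T^\star$ were to cross, then a vertex of $T^\star$ would be forced to lie strictly inside the empty circumdisk of some adjacent face, which is impossible.

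I expect this non-overlap step to be the main obstacle; it is essentially the classical fact that the Delaunay triangulation of a discrete vertex set in $\R^2$ or $\Hm^2$ is a valid geometric triangulation, which here must be applied to the infinite, $\pi_1(\sur)$-periodic vertex set $\proj^{-1}(V)$. Compactness of empty disks in $\lift{\sur}$ is precisely what allows the usual planar argument to go through in this non-compact setting.
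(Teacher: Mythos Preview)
Your overall strategy coincides with the paper's: lift to the universal cover, realize edges as geodesics, and argue by contradiction that two such edges cannot cross, using the empty circumdisks of the incident faces. You also correctly flag compactness of empty disks as the reason the planar reasoning survives in $\lift{\sur}$.

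The gap is at the step you yourself identify as the main obstacle. Your assertion ``if two geodesic edges of $T^\star$ were to cross, then a vertex of $T^\star$ would be forced to lie strictly inside the empty circumdisk of some adjacent face'' is the desired conclusion, not an argument: a priori both endpoints of $\lift{e_2}$ could sit outside the circumdisk $\lift{D_1}$ of a face containing $\lift{e_1}$ while $\lift{e_2}$ still crosses $\lift{e_1}$ by passing through $\lift{D_1}$ as a chord. The paper supplies the missing mechanism. Choose faces $\lift{f_1}\supset\lift{e_1}$ and $\lift{f_2}\supset\lift{e_2}$ with circumcircles $\lift{C_1},\lift{C_2}$ and empty disks $\lift{D_1},\lift{D_2}$; emptiness places the endpoints of $\lift{e_1}$ on $\lift{C_1}\setminus\lift{D_2}$ and those of $\lift{e_2}$ on $\lift{C_2}\setminus\lift{D_1}$. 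The two circles meet in two points, and the geodesic line $\lift{L}$ through those points separates $\lift{C_1}\setminus\lift{D_2}$ from $\lift{C_2}\setminus\lift{D_1}$. Hence $\lift{e_1}$ and $\lift{e_2}$ lie in opposite (closed) half-planes of $\lift{L}$ and cannot intersect in their interiors. This radical-axis separation is precisely the ingredient your sketch lacks.

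Your alternative route via convexity of the polyhedral surface $\Sigma$ is a genuinely different idea and can be made to work, but it is not free either: you would still need to argue that a convex polyhedral cap inscribed in $\Sm^2$ projects, under stereographic projection, to an \emph{embedded} triangulation of the plane (respectively the Poincar\'e disk), and to handle the non-compactness of $\Sigma$ coming from the infinite vertex set $\proj^{-1}(V)$. The paper avoids this by using the direct two-circle argument above.
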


\begin{proof}
  Let $V$ be a finite set of points on $\sur$, and let $T$ be the Delaunay triangulation of $(\sur,h)$ with vertex set $V$. Realize every edge of $T$ as a the unique geodesic segment in its homotopy class. We argue by contradiction and suppose that $T$ is not geometric, so that there are two edges $e_1$ and $e_2$ that intersect in their interiors. We then lift $e_1$ and $e_2$ to edges $\lift{e_1}$ and $\lift{e_2}$ of $\pb{T}$ whose interiors still intersect at one point. 

There are at least two distincts faces $\lift{f_1}$ and $\lift{f_2}$ of $\pb{T}$ such that $\lift{e_1}$ is an edge of $\lift{f_1}$ and $\lift{e_2}$ is an edge of $\lift{f_2}$. Let $\lift{C_1}$ and $\lift{C_2}$ be the circles inscribing $\lift{f_1}$ and $\lift{f_2}$, respectively. Since $\pb{T}$ is Delaunay, $\lift{C_1}$ and $\lift{C_2}$ bound empty disks $\lift{D_1}$ and $\lift{D_2}$, i.e., open disks not containing any point of $\proj^{-1}(V)$. Recall that, as mentioned in Section~\ref{sec:delaunay}, empty disks are compact even in the hyperbolic case, and that $\lift{e_1}\subset\lift{D_1}$ and $\lift{e_2}\subset\lift{D_2}$ (edges are considered as open).

The two circles $\lift{C_1}$ and $\lift{C_2}$ do not intersect more than twice. Let $\lift{L}$ be the geodesic line through their two intersection points. The endpoints of $\lift{e_1}$ are on $\lift{C_1}\setminus\lift{D_2}$ and those of $\lift{e_2}$ are on $\lift{C_2}\setminus\lift{D_1}$, so the two pairs of endpoints are on opposite sides of $\lift{L}$. As a consequence, $\lift{e_1}$ and $\lift{e_2}$ are on opposite sides of $\lift{L}$, so they cannot intersect. This leads to a contradiction.
\end{proof}

\section{The flip algorithm}
Let us consider a closed oriented surface $(\sur,h)$. 
The flip algorithm consists in performing Delaunay flips in any order,
starting from a given input geometric triangulation of $\sur$, until there is no
more Delaunay flippable edge. 

In this section, we first define a data structure that supports this
algorithm, then we prove the correctness of the algorithm. 

\subsection{Data structure}
In both cases of
a flat or hyperbolic surface, the group of isometries defining the
surface is denoted as $\gp$. 
We assume that a fundamental
domain $\F$ is given. By definition (Section~\ref{sec:surface}),
$\lift{\sur}$ is the union $\gp(\F)$ of the images of $\F$ under
the action of $\gp$. 

To represent a triangulation on the surface, we propose a data
structure generalizing the data structure previously introduced for
triangulations of flat orbifolds \cite{ct-dtced-16} and triangulations
of the Bolza surface \cite{it-idtbs-17}. The combinatorics of the
triangulation is given by the set of its vertices $V$ on the surface
and the set of its triangles, where each triangle gives access to its
three vertices in $V$ and its three adjacent triangles, and each
vertex gives access to one of its incident triangles. The geometry of
the triangulation is given by the set $\rep{V}$ of the
lifts of its vertices that lie in
the fundamental domain $\F$ and one lift $\rep{t}$ in $\lift{\sur}$ of
each triangle $t=(v_{0,t};v_{1,t};v_{2,t})$ of the triangulation,
chosen among the (one, two, or three) lifts of $t$ in $\lift{\sur}$
having at least one
vertex in $\F$: $\rep{t}$ has at least one of its vertices
$\rep{v_{i,t}}$ in $\F$ ($i=0,1$, or $2$); then the other vertices of
$\rep{t}$ are images $\is_{i+1,t}\cdot\rep{v_{i+1,t}}$ and
$\is_{i+2,t}\cdot\rep{v_{i+2,t}}$ of two vertices in $\rep{V}$, where $\is_{i+1,t}$ and
$\is_{i+2,t}$ are elements of $\gp$ (indices are taken
modulo~3). In the data structure, each vertex $v$ on the surface has
access to its representative $\rep{v}$, and each triangle $t$ on the
surface has access to the isometries $\is_{0,t},\is_{1,t}$, and
$\is_{2,t}$ allowing to construct $\rep{t}$, at least one of the isometries
being the identity $\e$. 
Note that two triangles $t$ and
$t'$ of $T$ that are adjacent on the surface are represented by two
triangles $\rep{t}$ and $\rep{t'}$, which are not necessarily adjacent
in $\lift{\sur}$ (Figure~\ref{fig:notation-flip} (left)). However,
there is an isometry $\is$ in $\gp$ such that 
$\rep{t}$ and $\is\cdot\rep{t'}$ are adjacent.

Let $T$ be an input triangulation given as such a data
structure.
Figure~\ref{fig:notation-flip} illustrates a Delaunay flip performed
on two adjacent triangles $t$ and $t'$ on the surface. The triangle
$\rep{t'}$ is first moved so that the vertices of the edge to be
flipped coincide. Then the edge is flipped. The isometries in the two
triangles created by the flip are easy to compute from the isometries
stored in $t$ and $t'$.
Note that the order in which isometries are composed is crucial in
the hyperbolic case, as they do not commute. We have shown that the
data structure can be maintained through flips. 

\begin{figure}[htb]
    \centering
    \includegraphics{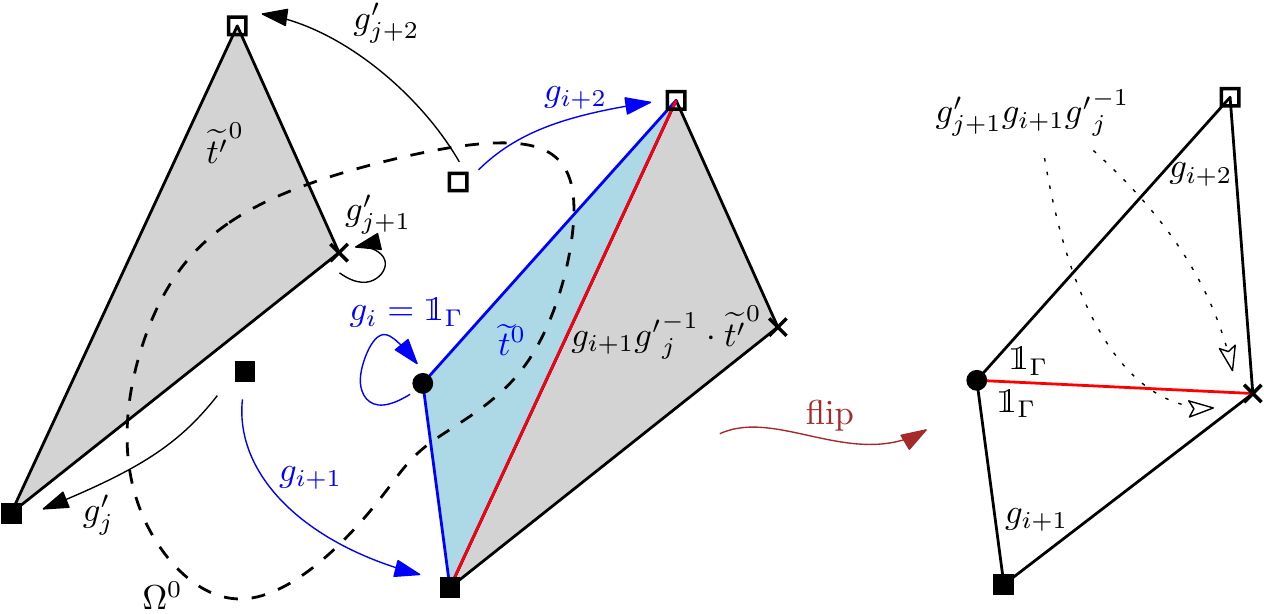}
    \caption{A flip. Here (hyperbolic) triangles are represented schematically with
      straight edges. Left: the two triangles $\rep{t}$ and
      $\rep{t'}$ before the flip. Here $\is_i=\e$. Right: the
      isometries in the two triangles created by the flip.}
    \label{fig:notation-flip}
\end{figure}

\subsection{Correctness of the algorithm}
The following statement is a key starting point. 

\begin{lemma}\label{lm:flip-geom}
Let $T$ be a geometric triangulation of $(\sur,h)$, and let $T'$ be obtained
from $T$ by a Delaunay flip. Then $T'$ is still
geometric.
\end{lemma}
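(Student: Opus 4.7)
My plan is to work in the universal cover $\lift\sur$ and verify that the new family of edges is still interior-disjoint there; projecting down then yields geometricity on $\sur$. Since $\pb T$ is a $\gp$-equivariant geometric triangulation, a Delaunay flip of $e$ on $\sur$ lifts to the simultaneous equivariant flip of every $\gp$-translate of any chosen lift of $e$. It therefore suffices to fix one lift $\lift e=(\lift v_1,\lift v_2)$ with adjacent triangles $\lift t_1=(\lift v_1,\lift v_2,\lift v_3)$ and $\lift t_2=(\lift v_2,\lift v_1,\lift v_4)$ in $\pb T$ and, writing $Q=\lift t_1\cup\lift t_2$, check that the replacement of $\lift e$ by the geodesic $\lift e'=(\lift v_3,\lift v_4)$ preserves interior-disjointness.

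The pivotal step is to show that $Q$ is a strictly convex quadrilateral in $\lift\sur$, so that $\lift e'$ lies inside $Q$ and meets $\partial Q$ only at its two endpoints. I would prove convexity by the same circle-separation argument used in the proof of Proposition~\ref{pr:geometric}. Let $\lift D_3$ and $\lift D_4$ be the disks circumscribing $(\lift v_3,\lift v_4,\lift v_2)$ and $(\lift v_3,\lift v_4,\lift v_1)$ respectively. The Delaunay hypothesis says $\lift v_1\notin\lift D_3$ and $\lift v_2\notin\lift D_4$. Both disks share the chord $(\lift v_3,\lift v_4)$. If $\lift v_1$ and $\lift v_2$ lay on the same side of the geodesic carrying this chord, then on that side one of the two disks would strictly contain the other, forcing the fourth point into one of the disks and contradicting the Delaunay inequality; hence $\lift v_1$ and $\lift v_2$ lie strictly on opposite sides of this geodesic. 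Combined with the fact that $\lift v_3$ and $\lift v_4$ already lie on opposite sides of the geodesic carrying $\lift e$ (since $\lift t_1$ and $\lift t_2$ are interior-disjoint triangles sharing $\lift e$), this gives convexity of $Q$.

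With $\lift e'$ living in the interior of $Q$ except at its endpoints, the rest is immediate. Because $Q$ is the union of two faces of the geometric triangulation $\pb T$, every other edge of $\pb T$ has interior disjoint from the interior of $Q$ and therefore cannot cross $\lift e'$; and two distinct lifts of the new edge are contained in two distinct $\gp$-translates of $Q$, which are unions of distinct pairs of faces of $\pb T$ (the $\gp$-action on $\pb T$ being free) and hence have disjoint interiors, so the lifts of the new edge are pairwise interior-disjoint. Consequently $\pb{T'}$ is a geometric triangulation of $\lift\sur$, and projecting down yields that $T'$ is geometric. The main obstacle is the convexity step in the hyperbolic setting, where circumscribed ``circles'' could in general be non-compact horocycles or hypercycles; this does not bite because empty disks are automatically compact (see the remark after Definition~\ref{def:Delaunay}), so the planar separation argument applies without modification.
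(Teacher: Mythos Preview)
Your argument is correct and reaches the same conclusion as the paper---strict convexity of the lifted quadrilateral $Q=\lift t_1\cup\lift t_2$---but by a genuinely different mechanism. The paper works with the circumscribing disks $\lift D_1,\lift D_2$ of the \emph{old} triangles: the Delaunay-flippable hypothesis places both triangles inside the lens $\lift D_1\cap\lift D_2$, and since that lens has interior angles at $\lift v_1,\lift v_2$ at most $\pi$, the quadrilateral is strictly convex. You instead use the circumscribing disks $\lift D_3,\lift D_4$ of the \emph{new} triangles and a pencil-of-circles separation to force $\lift v_1,\lift v_2$ onto opposite sides of the geodesic through $\lift v_3,\lift v_4$; together with the known separation of $\lift v_3,\lift v_4$ by $\lift e$, this says the two diagonals cross, which is equivalent to convexity of a simple quadrilateral. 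Both routes are short; the paper's angle argument is perhaps more visual, while yours makes the role of the Delaunay inequality more explicit and also spells out (which the paper leaves implicit) why convexity of $Q$ suffices for $T'$ to be geometric.

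One correction: your closing remark that ``empty disks are automatically compact'' is misapplied here, since $\lift D_3$ and $\lift D_4$ are only \emph{locally} Delaunay after the flip and need not be empty. Fortunately compactness is irrelevant to your argument. What you actually need is that each generalized disk is convex and hence contains the geodesic chord $[\lift v_3,\lift v_4]$; this is exactly what the paper notes in Section~\ref{sec:hyperb} for horocycles and hypercycles, and it is all that is required for the nesting of the two arcs on one side of the geodesic to go through.
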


\begin{proof}
Let $e$ be a Delaunay flippable edge and $\lift{e}$ a lift
in $\lift{\sur}$. 
Denote the vertices of $\lift{e}$ by $\lift{v}$ and $\lift{v'}$. Let $\lift{t_1}$
and $\lift{t_2}$ be the triangles of $\pb{T}$ incident to
$\lift{e}$. To prove that $T'$ is geometric, it is sufficient to prove
that $\lift{t_1}\cup\lift{t_2}$ is a strictly convex quadrilateral.

Let $\lift{C_1}$ (resp. $\lift{C_2}$) be the circle through the
three vertices of $\lift{t_1}$ (resp.\ $\lift{t_2}$). Note that
$\lift{C_1}$ and $\lift{C_2}$ may be non-compact. Let $\lift{D_1}$ and
$\lift{D_2}$ be the corresponding disks (as defined in
Section~\ref{sec:hyperb} on case of non-compact circles). The disk
$\lift{D_1}$ (resp.\ $\lift{D_2}$) is convex (in the Euclidean plane
if $\sur$ is a flat torus, or in the sense of hyperbolic geometry if
$\sur$ is a hyperbolic surface) and contains $\lift{t_1}$ (resp.\
$\lift{t_2}$). The fact that $e$ is Delaunay flippable then implies
that $\lift{t_1}$ and $\lift{t_2}$ are contained in
$\lift{D_1}\cap\lift{D_2}$ (see Figure~\ref{fig:flip-geometric}).
\begin{figure}[htb]
    \centering
    \includegraphics{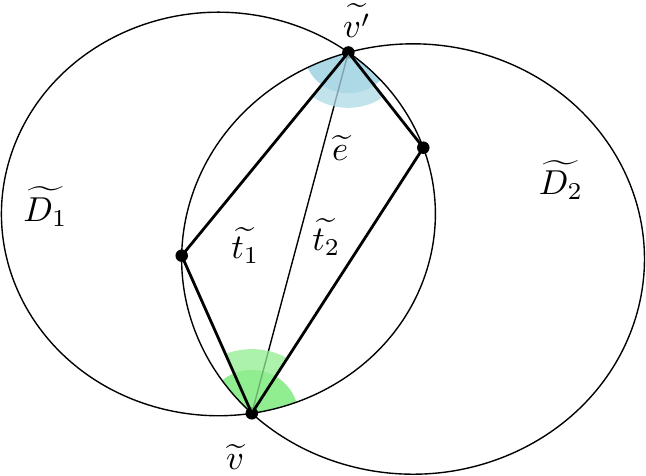}
    \caption{The quadrilateral is convex (edges are represented
      schematicaly as straight line segments).}
    \label{fig:flip-geometric}
\end{figure}
As a consequence, the sum of angles of $\lift{t_1}$ and $\lift{t_2}$ at $\lift{v}$ is smaller
than the interior angle at $\lift{v}$ of $\lift{D_1}\cap\lift{D_2}$, which is at most
$\pi$, and similarly at $\lift{v'}$. As a consequence, the quadrilateral
$\lift{t_1}\cup\lift{t_2}$ is strictly convex at $\lift{v}$ and $\lift{v'}$. Since it is strictly
convex at its other two vertices (as each of these vertices is a
vertex of a triangle), it is strictly convex, and the statement follows.
\end{proof}

The following lemma, using the diameter of the triangulation
(Definition~\ref{def:diam}), is central in the proof of the
termination of the algorithm (Theorem~\ref{th:flip-hyp}) for hyperbolic surfaces and in its
analysis for both flat tori and hyperbolic surfaces (Section~\ref{sec:analysis}).

\begin{lemma}\label{lem:edgelength}
	Let $T$ be a geometric triangulation of $(\sur,h)$. Then, the flip algorithm starting from $T$ will never insert an edge longer than $2\diam(T)$.
\end{lemma}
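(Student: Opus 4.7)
The plan is to show, by induction on the number of Delaunay flips performed, that every edge of the current triangulation has length at most $2\diam(T)$. This inductive claim immediately implies that no newly inserted edge exceeds $2\diam(T)$.

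For the base case, let $\F_0 \subset \lift{\sur}$ be a fundamental domain realizing $\diam(T)$, chosen as a union of lifts of the triangles of $T$. Each triangle of $T$ lifts to a $\gp$-translate of a triangle contained in $\F_0$, and hence has diameter at most $\diam(T)$; in particular every edge of $T$ has length at most $\diam(T)$.

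For the inductive step, consider a Delaunay flip replacing $\lift{e} = (\lift{v_1}, \lift{v_2})$ by $\lift{e'} = (\lift{v_3}, \lift{v_4})$. By Lemma~\ref{lm:flip-geom}, the two adjacent triangles $\lift{t_1}, \lift{t_2}$ form a strictly convex quadrilateral $Q$ in $\lift{\sur}$, and $\lift{e'}$ is the second diagonal of $Q$. Since $Q$ is convex, $\lift{e'}\subset Q$ and hence $|\lift{e'}| \leq \diam(Q)$; moreover, the triangle inequality yields $|\lift{e'}| \leq |\lift{v_3}\lift{v_i}| + |\lift{v_i}\lift{v_4}|$ for each $i \in \{1, 2\}$, where these are sides of $Q$, i.e.\ edges of the current triangulation (not equal to $\lift{e}$ or $\lift{e'}$).

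The main obstacle is to refine the inductive estimate from a naive $4\diam(T)$ down to the desired $2\diam(T)$. I plan to do so by establishing the stronger geometric statement that, at every step of the algorithm, the quadrilateral $Q$ of any flip is contained in the union of at most two adjacent $\gp$-translates of $\F_0$, whose diameter is at most $2\diam(T)$. This is immediate before the first flip since adjacent triangles of $T$ fit into two adjacent translates of $\F_0$, and it is preserved by a Delaunay flip because the two new triangles occupy exactly the same convex region $Q$ as the two old ones. Extending this invariant across multiple flips---especially when the two triangles meeting at the flipped edge both originated from earlier flips---is where the technical work lies: the plan is to exploit that each Delaunay flip only redraws a single diagonal inside a convex region already fitting in such a union, without enlarging the spatial footprint of the triangulation, so that the union of two adjacent translates of $\F_0$ containing the old quadrilateral also contains the quadrilateral formed with any neighboring triangle participating in a subsequent flip. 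Combined with the triangle inequality, this yields $|\lift{e'}| \leq 2\diam(T)$ uniformly in the number of flips.
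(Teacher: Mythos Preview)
Your inductive invariant --- that any quadrilateral arising at a flip is contained in the union of at most two adjacent $\gp$-translates of $\F_0$ --- is not preserved, and the sketch you give for why it should be does not hold up. After a single flip in a quadrilateral $Q$, the two new triangles together occupy the same region $Q$, yes; but the invariant concerns \emph{all} adjacent pairs, not just this one. Take one of the new triangles $t'$ (which may already have diameter close to $2\diam(T)$, since it spans $Q$) and pair it with one of its neighbors $t''$ outside $Q$. There is no reason the quadrilateral $t'\cup t''$ fits in two translates of $\F_0$: generically it needs three. A second flip there produces a triangle of diameter possibly close to $3\diam(T)$, and iterating this the bound degrades linearly in the number of flips. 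The vague claim that a flip ``does not enlarge the spatial footprint of the triangulation'' is true only in the trivial sense that the triangulation always covers all of $\sur$; it says nothing about diameters of individual triangles or quadrilaterals. So the proposal, as it stands, does not close the gap between the naive $2^k\diam(T)$-type bound and the uniform $2\diam(T)$.

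The paper's argument avoids this by exploiting a \emph{global} monotone quantity rather than a local combinatorial invariant. It lifts the triangulation to a polyhedral surface $\Sigma_k$ inscribed in $\Sm^2$ via inverse stereographic projection; each Delaunay flip replaces two faces by the other two faces of the tetrahedron they span, pushing the surface outward, so that $\Sigma_0\subset\Sigma_1\subset\cdots\subset\Sigma_k$ in the sense of Section~\ref{sec:delaunay}. If some $T_k$ had an edge longer than $2\diam(T)$, its midpoint $\lift v$ would have its entire fundamental domain inside the disk $\lift D$ of radius $\diam(T)$ about $\lift v$, while the lifted edge's endpoints lie outside $\lift D$. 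Comparing the heights of the projections of $\sigma^{-1}(\lift v)$ onto $\Sigma_0$ and $\Sigma_k$ relative to the plane through $\sigma^{-1}(\partial\lift D)$ then contradicts $\Sigma_0\subset\Sigma_k$. The point is that the nesting of the $\Sigma_k$ encodes the cumulative effect of all past Delaunay flips at once; this is exactly the global control your local induction cannot provide.
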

Note that the length of an edge can be measured on any or its lifts in the universal
covering space $\lift{\sur}$.
\begin{proof}
  Let $T_k$ be the triangulation obtained from $T=T_0$ after $k$ flips
  and let ${\Sigma_k}$ be the corresponding polyhedral surface of
  $\Rm^3$ as defined in Section~\ref{sec:delaunay}. Since we perform
  only Delaunay flips,
  ${\Sigma_0}\subset\ldots\subset{\Sigma_k}\subset{\Sigma_{k+1}}$ (with the
  abuse of language mentioned in Section~\ref{sec:delaunay}).

We will prove the result by contradiction. Let us
  assume that $T_k$ has an edge $e$ of length larger than
  $2\diam(T)$. Let $\Omega$ be a fundamental domain of $\sur$ having
  diameter $\diam(T)$, given as the union of lifts of triangles of $T=T_0$
  (it is not clear how to compute such a fundamental domain
  efficiently but its existence is clear). Let $v$ be the midpoint of
  $e$ and $\lift{v}$ its lift in $\Omega$. Let
  $\lift{e}=(\lift{v_1},\lift{v_2})$ be the unique lift of $e$ whose
  midpoint is $\lift{v}$. The domain $\Omega$ is strictly included in the
  disk $\lift{D}$ of radius
  $\diam(T)$ and centered at $\lift{v}$, by definition of $\diam(T)$
  (see Figure~\ref{fig:no-long-edge} (left)).

  Let $P_D$ denote the plane in $\R^3$ containing the circle on $\Sm^2$ that is the boundary 
  of $\sigma^{-1}(D)$ (recall that $\sigma$
  denotes the stereographic projection, see
  Section~\ref{sec:delaunay}), and let $p$ denote the point
  $\sigma^{-1}(\lift{v})$ on $\Sm^2$. As
  $p\in \sigma^{-1}(\Omega)\subset\sigma^{-1}(\lift{D})$, the projection
  $p^{\Sigma_0}$ of $p$ onto $\Sigma_0$ lies above $P_D$
  (Figure~\ref{fig:no-long-edge} (right)).

\begin{figure}[htb]
    \centering
    \includegraphics{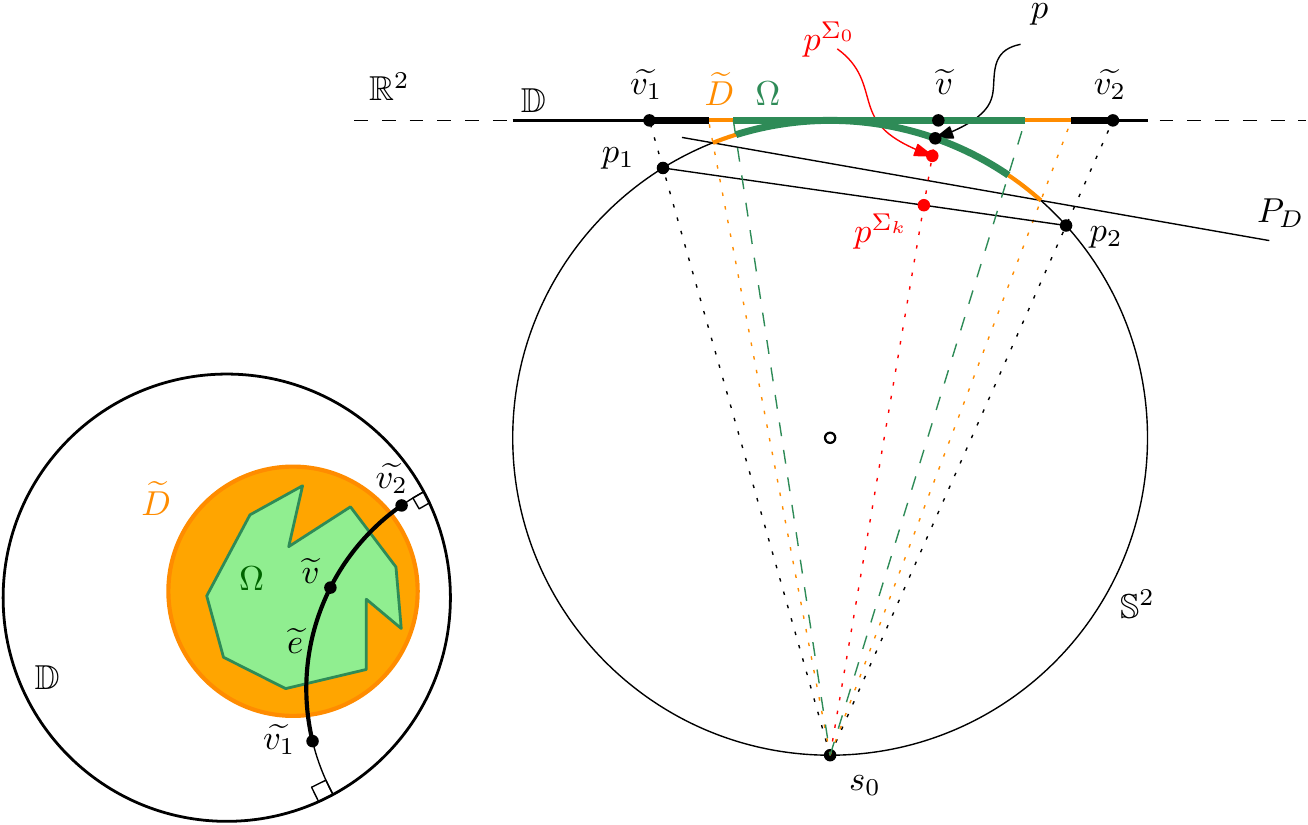}
    \caption{Illustration for the proof of Lemma~\ref{lem:edgelength}
      (for a hyperbolic surface). Left: notation in
      $\Hm^2$. Right: contradiction seen in a cutting plane in $\R^3$.}
    \label{fig:no-long-edge}
\end{figure}

Now, denote the edge $\sigma^{-1}(\lift{e})$ on $\Sm^2$ as $(p_1,p_2)$. The points $p_1$ and $p_2$
lie outside $\sigma^{-1}(D)$. So, the corresponding edge
$e^{\Sigma}=[p_1,p_2]$ of $\Sigma_k$ lies below the plane $P_C$, thus
the projection $p^{\Sigma_k}\in [p_1,p_2]$ of $p$ onto $\Sigma_k$ lies below $P_C$.  

From what we have shown, $p^{\Sigma_k}$ is a point of $\Sigma_k$ that
lies strictly between the pole $s_0$ and the point $p^{\Sigma_0}$ of
$\Sigma_0$, which contradicts the inclusion $\Sigma_0\subset\Sigma_k$. 
\end{proof}

We will now show that, for any order, the flip algorithm terminates
and returns the Delaunay triangulation of the surface. The proof given
for the hyperbolic case would also work for the flat case. However we
propose a more elementary proof for the flat case.

\paragraph*{Flat tori}

The case of flat tori is easy, and might be considered as folklore. However, as we have not found a reference, we give the details here for completeness. 

We define the weight of a triangle $t$ of a geometric triangulation $T$ of $\Tm^2$ as the number of vertices of $\pb{T}$ that lie in the open circumdisk of a lift of $t$. The weight $w(T)$ of $T$ is defined as the sum of the weights of its triangles. 

\begin{lemma}
The weight $w(T)$ of a triangulation $T$ of a flat torus $(\Tm^2, h)$ is finite. Let $T'$ be the triangulation obtained from a geometric triangulation $T$ after performing a Delaunay flip. Then $w(T') \leq w(T) - 2$. 
\end{lemma}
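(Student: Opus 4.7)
For the finiteness claim, I would fix a lift $\lift{t}\subset\R^2$ of each of the finitely many triangles of $T$; each circumdisk is compact in $\R^2$, and $\pb{V}$ is a finite union of discrete lattice orbits, so only finitely many points of $\pb{V}$ lie in any fixed compact set. Summing over triangles gives $w(T)<\infty$.

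For the flip decrease, the plan is to note that only the two triangles $t_1,t_2$ incident to the flipped edge are replaced (by $t_1',t_2'$), while the vertex set is unchanged. Writing $\lift{v_1},\lift{v_2},\lift{v_3},\lift{v_4}$ for the quadrilateral lifts and $\lift{D_1},\lift{D_2},\lift{D_1'},\lift{D_2'}$ for the corresponding open circumdisks, one has
\[
w(T')-w(T)=\sum_{\lift{u}\in\pb{V}}\Bigl(\mathbbm{1}[\lift{u}\in\lift{D_1'}]+\mathbbm{1}[\lift{u}\in\lift{D_2'}]-\mathbbm{1}[\lift{u}\in\lift{D_1}]-\mathbbm{1}[\lift{u}\in\lift{D_2}]\Bigr),
\]
a finite sum by the first part. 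The four quadrilateral vertices contribute exactly $-2$: by the Delaunay-flippability of the edge, $\lift{v_4}$ lies strictly inside $\lift{D_1}$ and (by the inscribed-angle criterion) $\lift{v_3}$ strictly inside $\lift{D_2}$, giving $-1$ each; while the Delaunay condition after the flip places $\lift{v_1}\notin\lift{D_1'}$ and $\lift{v_2}\notin\lift{D_2'}$, so $\lift{v_1},\lift{v_2}$ contribute $0$ each.

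For every other $\lift{u}\in\pb{V}$, the plan is to prove the pointwise indicator inequality by establishing the two inclusions $\lift{D_1'}\cup\lift{D_2'}\subseteq\lift{D_1}\cup\lift{D_2}$ and $\lift{D_1'}\cap\lift{D_2'}\subseteq\lift{D_1}\cap\lift{D_2}$, which together are equivalent to that inequality. My preferred route is a planar inversion centered at $\lift{v_3}$: the three circles $\lift{C_1},\lift{C_1'},\lift{C_2'}$ pass through the centre and so become lines, their associated disks become half-planes, while $\lift{C_2}$ becomes a circle whose associated disk is its \emph{exterior} (since $\lift{v_3}\in\lift{D_2}$). Using the Delaunay conditions to identify which side of each line the three half-planes occupy, the union inclusion reduces to the elementary observation that the circular segment cut off from the circumcircle of the image triangle by the appropriate chord lies on the correct side of two of those lines; and $\lift{D_1'}\cap\lift{D_2'}$ identifies with the exterior angular wedge at one vertex of the image triangle, which is then contained both in the remaining half-plane and in the exterior of the circumcircle (this last point follows from the tangent-chord characterisation of an inscribed triangle).

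I expect the intersection inclusion to be the main obstacle: radical-axis pencils through pairs of quadrilateral vertices readily give the union inclusion but only yield the intersection inclusion for points inside the quadrilateral, and treating exterior points cleanly requires the inversion trick above (or an equivalent lifting to the space of circles). Summing all contributions finally yields $w(T')\le w(T)-2$.
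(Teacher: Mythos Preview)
Your approach is essentially the same as the paper's: both rest on the two inclusions $\lift{D_1'}\cup\lift{D_2'}\subset\lift{D_1}\cup\lift{D_2}$ and $\lift{D_1'}\cap\lift{D_2'}\subset\lift{D_1}\cap\lift{D_2}$, together with the observation that $\lift{v_3},\lift{v_4}$ each drop out of exactly one old disk while $\lift{v_1},\lift{v_2}$ contribute nothing. The paper simply asserts the two inclusions with reference to a figure, whereas you go further and outline an inversion argument centred at $\lift{v_3}$ to establish them; that extra work is sound and indeed fills a step the paper leaves implicit.
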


\begin{proof}
A circumdisk of any triangle in $\R^2$ is compact, so, it can only contain a finite number of vertices of $\pb{T}$. The sum $w(T)$ of these numbers over triangles of $T$ is clearly finite as the number of triangles of $T$ is finite. Let us now focus on a quadrilateral in $\R^2$ that is a lift of the quadrilateral on $\Tm^2$ whose diagonal $e$ is flipped. Let $\lift{D_1}$ and $\lift{D_2}$ denote the two open circumdisks in $\R^2$ before the flip and $\lift{D'_1}$ and $\lift{D'_2}$ denote the two open circumdisks after the flip, then $\lift{D'_1}\cup\lift{D'_2} \subset \lift{D_1}\cup\lift{D_2}$ and $\lift{D'_1}\cap\lift{D'_2} \subset \lift{D_1}\cap\lift{D_2}$ (see Figure~\ref{fig:disks-tore}).
\begin{figure}[htb]
    \centering
    \includegraphics{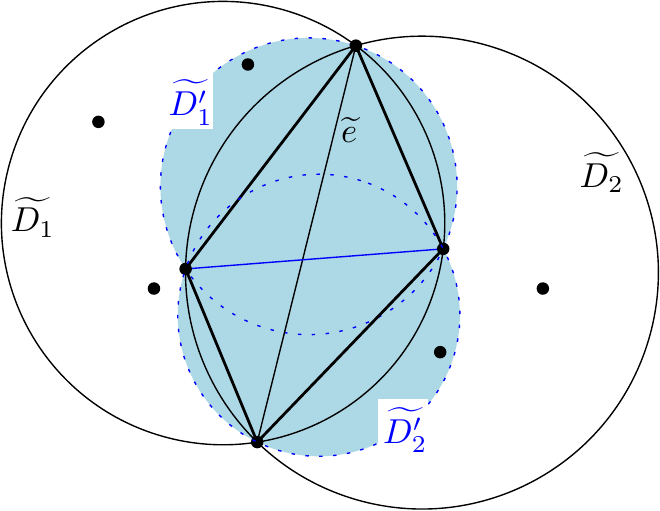}
    \caption{Circumdisks $\lift{D_1}$ and $\lift{D_2}$ before flipping $\lift{e}$ and $\lift{D_1}$ and $\lift{D'_2}$ after the Delaunay flip.}
    \label{fig:disks-tore}
\end{figure}
 Moreover, by definition of a Delaunay flip, the union $\lift{D'_1}\cup\lift{D'_2}$ contains at least two fewer vertices of $\pb{T}$ than $\lift{D_1}\cup\lift{D_2}$, which are the two vertices of the quadrilateral that are not vertices of $\lift{e}$. This concludes the proof. 
\end{proof}

The result follows trivially:

\begin{theorem}\label{th:flip-torus}
Let $T$ be a geometric triangulation of a flat torus with finite vertex set $V$. The flip algorithm terminates and outputs the Delaunay triangulation of $V$. 
\end{theorem}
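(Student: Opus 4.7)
The proof is essentially a packaging of the preceding lemma together with Lemma~\ref{lm:flip-geom} and the local-to-global Delaunay principle. My plan is the following.

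First, I would establish \emph{termination}. By Lemma~\ref{lm:flip-geom}, starting from the geometric triangulation $T$ the flip algorithm produces a sequence $T=T_0,T_1,T_2,\ldots$ of geometric triangulations. The previous lemma shows that $w(T_0)$ is a finite nonnegative integer and that $w(T_{k+1})\le w(T_k)-2$ as long as a Delaunay flip is available. Hence the algorithm must halt after at most $w(T_0)/2$ flips, on a geometric triangulation $T^*$ with vertex set $V$ having no Delaunay flippable edge.

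Second, I would show that $T^*$ is the Delaunay triangulation of $V$. The fact that no edge of $T^*$ is Delaunay flippable means that $\pb{T^*}$ is \emph{locally Delaunay}, i.e., for every edge $\lift{e}$ shared by two triangles $\lift{t_1},\lift{t_2}$ of $\pb{T^*}$, the third vertex of $\lift{t_2}$ lies outside the circumdisk of $\lift{t_1}$. Translated via the lifted polyhedral surface $\Sigma^*\subset\Rm^3$ associated to $\pb{T^*}$ in Section~\ref{sec:delaunay}, this says that $\Sigma^*$ is convex at every edge. Since a polyhedral surface in $\R^3$ that is locally convex at each edge is globally convex, $\Sigma^*$ is globally convex, and by the equivalence recalled in Section~\ref{sec:delaunay} ($\Sigma$ convex $\iff$ $T^\star$ Delaunay) this means that $\pb{T^*}$ is a Delaunay triangulation of $\proj^{-1}(V)$ in $\R^2$. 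Projecting back via $\proj$ then gives that $T^*$ is the (unique, under the non-degeneracy assumption made at the end of Section~\ref{sec:delaunay}) Delaunay triangulation of $V$ on $(\Tm^2,h)$.

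The only step requiring any thought is the local-to-global passage for the infinite triangulation $\pb{T^*}$; the rest is bookkeeping. I would handle that step by invoking the polyhedral-surface reformulation from Section~\ref{sec:delaunay}, since the local convexity of $\Sigma^*$ at every edge immediately yields global convexity and thereby the Delaunay property of every face of $\pb{T^*}$.
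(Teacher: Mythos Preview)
Your proposal is correct and follows the paper's approach: the paper's own proof is simply the sentence ``The result follows trivially'' after the weight-decreasing lemma, and you have correctly unpacked what that means---termination via the strictly decreasing integer weight, and the output being Delaunay because a triangulation with no Delaunay-flippable edge has a locally (hence globally) convex lifted polyhedral surface $\Sigma^*$, which by Section~\ref{sec:delaunay} is equivalent to being Delaunay. Your explicit mention of the local-to-global step for the infinite periodic triangulation is a welcome elaboration that the paper leaves implicit.
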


\begin{corollary}
	The geometric flip graph $\cF_{\Tm^2, h,V}$ is connected.
\end{corollary}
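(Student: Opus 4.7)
The plan is to derive the corollary directly from Theorem~\ref{th:flip-torus}, using the unique Delaunay triangulation of $V$ as a common hub. First, I would observe that for any geometric triangulation $T\in \cF_{\Tm^2,h,V}$, the flip algorithm started from $T$ terminates at the Delaunay triangulation $T_D$ of $V$ after a finite number of Delaunay flips. By Lemma~\ref{lm:flip-geom}, every intermediate triangulation produced along the way is again geometric, and hence is a vertex of $\cF_{\Tm^2,h,V}$; consecutive triangulations in the sequence differ by a single flip, so together they trace out a path from $T$ to $T_D$ inside the geometric flip graph.

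Next, given two arbitrary $T_1,T_2\in\cF_{\Tm^2,h,V}$, I would apply this observation twice to obtain paths $T_1\to T_D$ and $T_2\to T_D$ in $\cF_{\Tm^2,h,V}$. Since the flip relation is symmetric---flipping the new diagonal in the resulting quadrilateral recovers the original triangulation, as is immediate from Definition~\ref{def:flip}---the edges of $\cF_{\Tm^2,h,V}$ are undirected, so the second path can be reversed and concatenated with the first to yield a path from $T_1$ to $T_2$. Uniqueness of $T_D$ is guaranteed by the standing non-degeneracy assumption on $V$ made at the end of Section~\ref{sec:delaunay}.

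There is no real obstacle here; the argument is essentially a packaging of results already established. The only points worth checking are the two ingredients just mentioned: that the flip algorithm never leaves the set of geometric triangulations (Lemma~\ref{lm:flip-geom}) and that single flips can be undone by single flips (Definition~\ref{def:flip}). Both hold without any additional work, so the proof reduces to two or three lines.
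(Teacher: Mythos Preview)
Your proposal is correct and matches the paper's reasoning: the corollary is stated immediately after Theorem~\ref{th:flip-torus} without proof, as a trivial consequence. The argument you spell out---reaching the Delaunay triangulation from any geometric triangulation via Lemma~\ref{lm:flip-geom} and Theorem~\ref{th:flip-torus}, then using it as a common hub---is exactly the intended one.
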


\paragraph*{Hyperbolic surfaces}
To show that the flip algorith terminates in the hyperbolic case, we cannot mimic the proof presented for the flat tori since the circumcircle of a hyperbolic triangle can be non-compact (see Section~\ref{sec:hyperb}) and thus can have an infinite weight. Note also that the proof cannot use a property on the angles of the Delaunay triangulation similar to what holds in the Euclidean case: in $\Hm^2$, the locus of points seeing a segment with a given angle is not a circle arc, and thus the Delaunay triangulation of a set of points in $\Hm^2$ does not maximize the smallest angle of triangles.
The proof relies on Lemma~\ref{lem:edgelength}.

\begin{theorem}\label{th:flip-hyp}
	Let $T$ be a geometric triangulation of a closed hyperbolic surface with finite vertex set $V$. The flip algorithm terminates and outputs the Delaunay triangulation of~$V$.
\end{theorem}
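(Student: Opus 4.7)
The plan is to establish two things: termination of the algorithm, and that its output is the Delaunay triangulation of $V$. I treat these in turn.

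\emph{Termination.} By Lemma~\ref{lem:edgelength}, every edge that can appear during the algorithm has hyperbolic length at most $2\diam(T)$. Because $\gp$ acts properly discontinuously on $\Hm^2$ and $V$ is finite, $\pb{V}$ is a discrete subset of $\Hm^2$; closed hyperbolic balls being compact, only finitely many lifts of vertices of $V$ sit within distance $2\diam(T)$ of any fixed lift. Consequently only finitely many homotopy classes of locally geodesic arcs between points of $V$ can ever serve as an edge during the algorithm, so only finitely many geometric triangulations of $(\sur,h)$ with vertex set $V$ are reachable. The proof of Lemma~\ref{lem:edgelength} also established the strict inclusion $\Sigma_k\subsetneq\Sigma_{k+1}$ between the polyhedral surfaces before and after a Delaunay flip, so no triangulation is ever revisited. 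Strict monotonicity inside a finite reachable set forces the algorithm to halt after finitely many flips.

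\emph{Correctness.} Let $T^*$ denote the output. Since no edge is Delaunay flippable, every edge of $T^*$ is locally Delaunay. I argue that this forces $T^*$ to be globally Delaunay by a walking argument lifted to $\Hm^2$. Suppose for contradiction that some face $\lift{f}$ of $\pb{T^*}$ has a vertex $\lift{w}\in\pb{V}$ strictly inside its circumdisk. Then $\lift{w}$ lies on the opposite side of some edge $\lift{e}$ of $\lift{f}$ from the third vertex; because $\lift{e}$ is locally Delaunay, a standard in-disk comparison shows that the triangle $\lift{f}'$ across $\lift{e}$ also has $\lift{w}$ strictly inside its circumdisk. Iterating along a geodesic from the interior of $\lift{f}$ toward $\lift{w}$ yields a sequence of distinct triangles of $\pb{T^*}$, each strictly closer to $\lift{w}$ and each containing $\lift{w}$ in its circumdisk. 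Since $\pb{V}$ is discrete and edges are of bounded length, this sequence is finite and must end at a triangle having $\lift{w}$ as a vertex, which contradicts $\lift{w}$ being strictly interior to the corresponding circumdisk.

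The principal obstacle is the in-disk comparison step, which in the hyperbolic setting must cope with circumcircles that may be horocycles or hypercycles rather than ordinary compact circles. This is handled uniformly by the polyhedral-surface viewpoint of Section~\ref{sec:delaunay}: the local Delaunay condition at an edge $\lift{e}$ is equivalent to convexity of $\Sigma^*$ at the corresponding edge in $\Rm^3$, a purely Euclidean notion insensitive to whether the circles in $\Hm^2$ are compact or not. With this reformulation, the classical in-disk comparison underlying the walking argument transfers to $\Hm^2$ without change, completing the proof.
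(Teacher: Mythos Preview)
Your termination argument is essentially the paper's: bounded edge length via Lemma~\ref{lem:edgelength}, finitely many candidate edges by compactness of hyperbolic balls, and irreversibility via the strict inclusion $\Sigma_k\subsetneq\Sigma_{k+1}$. The paper phrases the last point slightly differently---once an edge is flipped, its segment in $\R^3$ becomes interior to all later $\Sigma_{k'}$ and so that edge can never reappear, hence the number of flips is bounded by the number of candidate edges---but the content is the same.

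For correctness you take a longer route than the paper. The paper dispatches it in one line: an output with no Delaunay-flippable edge has $\Sigma$ convex at every edge (Section~\ref{sec:delaunay} states that $\Sigma$ is concave at $e^\Sigma$ exactly when $e$ is Delaunay-flippable), and $\Sigma$ convex is equivalent to the lifted triangulation being Delaunay. Your walking argument is the classical direct proof that ``locally Delaunay implies globally Delaunay'' and is a legitimate alternative, but since you end up invoking the polyhedral-surface picture anyway to justify the in-disk comparison across possibly non-compact circumcircles, the detour buys nothing: that same polyhedral picture already delivers the global statement in one stroke. One minor wrinkle in your walk: the stated reason for its finiteness (``$\pb{V}$ discrete and edges of bounded length'') is not quite the operative one; the walk is finite simply because the geodesic segment from the interior of $\lift{f}$ to $\lift{w}$ is compact and $\pb{T^*}$ is locally finite.
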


\begin{proof}
We use the same notation as in the proof Lemma~\ref{lem:edgelength}. Once an edge of $T_k$ is flipped, it can never reappear in the triangulation, as the corresponding segment in $\R^3$ becomes interior to the polyhedral surface $\Sigma_{k+1}$ (see Section~\ref{sec:delaunay}) and further surfaces $\Sigma_{k'}, k'\geq k+1$. In addition, all the introduced edges have length smaller than $2\diam(T)$ by Lemma~\ref{lem:edgelength}. Moreover, there is only a finite number of edges with vertices in $V$ that are shorter than $2\diam(T)$ on $S$, as a circle given by a center and a bounded radius is compact. So, the flip algorithm terminates. The output does not have any Delaunay flippable edge, so, it is the Delaunay triangulation.  
\end{proof}

\begin{corollary}
	The geometric flip graph $\cF_{S, h,V}$ is connected.
\end{corollary}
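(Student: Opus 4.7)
My plan is to deduce the corollary directly from Theorem~\ref{th:flip-hyp} together with Lemma~\ref{lm:flip-geom} and the uniqueness of the Delaunay triangulation, exactly as one obtains the analogous corollary for the flat torus. The strategy is to show that every vertex of $\cF_{S,h,V}$ is connected by a path in this graph to one distinguished vertex, namely the Delaunay triangulation of $V$, which suffices for connectedness.

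Concretely, I would start from an arbitrary geometric triangulation $T$ of $(S,h)$ with vertex set $V$ and run the flip algorithm on $T$. By Theorem~\ref{th:flip-hyp}, this procedure terminates after finitely many Delaunay flips and outputs the Delaunay triangulation $T_\text{Del}$ of $V$. By Lemma~\ref{lm:flip-geom}, each triangulation produced along the way is geometric, since Delaunay flips preserve geometricity. Therefore the entire sequence of intermediate triangulations is a walk in $\cF_{S,h,V}$, and it connects $T$ to $T_\text{Del}$.

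Since $T$ was arbitrary and $T_\text{Del}$ is the same vertex of $\cF_{S,h,V}$ for every starting triangulation (uniqueness of the Delaunay triangulation for a non-degenerate vertex set was recorded in Section~\ref{sec:delaunay} as a consequence of the convex polyhedral surface construction), any two vertices of $\cF_{S,h,V}$ are joined by concatenating their flip paths to $T_\text{Del}$. Hence $\cF_{S,h,V}$ is connected.

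There is no real obstacle here: all the heavy lifting has already been done in Lemma~\ref{lm:flip-geom}, Lemma~\ref{lem:edgelength} and Theorem~\ref{th:flip-hyp}. The only point worth stating explicitly, to keep the argument self-contained, is that the flip path provided by the algorithm genuinely lives in $\cF_{S,h,V}$ (a consequence of Lemma~\ref{lm:flip-geom}) and that the endpoint is the unique Delaunay triangulation, so it is independent of the starting triangulation.
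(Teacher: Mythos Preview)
Your proposal is correct and matches the paper's intended reasoning: the corollary is stated there without proof as an immediate consequence of Theorem~\ref{th:flip-hyp}, and you have simply spelled out the obvious argument (every geometric triangulation is connected by a sequence of geometric triangulations, via Lemma~\ref{lm:flip-geom}, to the unique Delaunay triangulation). There is nothing to add.
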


\section{Algorithm analysis}\label{sec:analysis}
For a triangulation on $n$ vertices in the Euclidean plane, counting the weights of triangulations leads to the optimal $O(n^2)$ bound. However the same argument does not yield a bound even for the flat torus, since points must be counted in the universal cover. 

\begin{theorem} \label{tm:bound-torus}
	For any triangulation $T$ with $n$ vertices of a torus $(\Tm^2,h)$, there is a sequence of flips of length $C_h\cdot\diam(T)^2\cdot n^2$ connecting $T$ to a Delaunay triangulation of $(\Tm^2,h)$, where $C_h$ only depends on $h$.
\end{theorem}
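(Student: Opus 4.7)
The strategy is to bound the number of flips of the Delaunay flip algorithm started from $T$ by the total number of \emph{distinct} geometric edges that can appear during its execution, combining the monotonicity of the lifted polyhedral surfaces of Section~\ref{sec:delaunay} with a density count in the universal cover.

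First, run the Delaunay flip algorithm starting from $T$: by Theorem~\ref{th:flip-torus} it terminates at the Delaunay triangulation of $V$, and by Lemma~\ref{lem:edgelength} every edge appearing in any intermediate triangulation $T_k$ has length at most $2\diam(T)$. The sequence of associated polyhedral surfaces in $\R^3$ satisfies $\Sigma_0\subset \Sigma_1\subset \cdots$, and each Delaunay flip makes the segment in $\R^3$ corresponding to the flipped edge strictly interior to the subsequent $\Sigma_k$; hence a geometric edge of the surface that has been flipped away can never reappear in any later $T_{k'}$. Since each triangulation of $(\Tm^2,h)$ on $n$ vertices has $3n$ edges (by the Euler formula for the torus) and each of the $F$ flips creates exactly one edge that has never been present before, the total number of distinct edges appearing during the algorithm equals exactly $3n+F$.

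Next, count the possible geometric edges on $(\Tm^2,h)$ with endpoints in $V$ and length at most $2\diam(T)$. Writing $(\Tm^2,h)=\R^2/(\Z e_1+\Z u)$ and fixing an arbitrary lift $\lift{v}\in \R^2$ of each $v\in V$, every such edge incident to $v$ is represented by a geodesic segment from $\lift{v}$ to a lift of some $w\in V$ inside the disk $B(\lift{v},2\diam(T))$ of area $4\pi\diam(T)^2$. The lifts of $V$ form a $(\Z e_1+\Z u)$-invariant set with $n$ points per fundamental domain of area $A_h=|u_y|$, so a standard packing argument bounds the number of these lifts inside the disk by $C_h\cdot \diam(T)^2\cdot n$ for a constant $C_h$ depending only on $h$ (the boundary correction, coming from translates of a fundamental domain only partially covered by the disk, is uniformly absorbed because $\diam(T)$ is bounded below by the diameter of $(\Tm^2,h)$). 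Summing over $v\in V$ and dividing by $2$ gives at most $C_h\cdot \diam(T)^2\cdot n^2$ possible edges, and combining with the previous paragraph yields $3n+F\leq C_h\cdot \diam(T)^2\cdot n^2$, whence the announced bound after absorbing the additive $3n$ into the constant.

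The main conceptual obstacle is that the direct weight-counting proof used for Theorem~\ref{th:flip-torus} does \emph{not} extend to a polynomial bound: the weight of a single triangle of $T$ counts lattice lifts inside its circumdisk, but a thin triangle of bounded edge length can have an arbitrarily large circumradius, so the weight of $T$ is not controlled by $\diam(T)$ alone. The way around this is to count distinct \emph{edges} rather than weight, using that the monotonicity of $\Sigma_k$ forces each edge to appear at most once while Lemma~\ref{lem:edgelength} provides a uniform length bound $2\diam(T)$ on every edge that ever occurs; together these two ingredients make the density count in $\R^2$ go through.
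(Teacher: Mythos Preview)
Your proof is correct and follows essentially the same route as the paper: bound the number of flips by the number of distinct edges that can ever appear, use the monotonicity of the polyhedral surfaces $\Sigma_k$ to forbid edge reappearance, invoke Lemma~\ref{lem:edgelength} to cap edge lengths by $2\diam(T)$, and then count lifts in a disk of that radius in $\R^2$, multiplying by the $O(n^2)$ choices of endpoint pairs. The only cosmetic difference is in the lattice-point count: the paper sends the lattice of lifts of a fixed vertex to $\Z^2$ by an affine map and applies Pick's theorem to the image of the disk, whereas you invoke a density/packing argument and absorb the boundary term using the lower bound $\diam(T)\geq \mathrm{diam}(\sur,h)$; both give the same $C_h\cdot\diam(T)^2$ bound per endpoint pair. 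Your closing paragraph explaining why the weight-based termination proof does not by itself yield a polynomial bound is a helpful remark that the paper leaves implicit.
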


\begin{proof}
  Let $e=(v_1,v_2)$ be an edge appearing during the flip algorithm, and $\lift{v_1}$ (resp.\ $\lift{v_2}$) be a lift of $v_1$ (resp.\ $v_2$), such that $(\lift{v_1},\lift{v_2})$ is a lift $\lift{e}$ of $e$. The point $\lift{v_2}$ lies in a circle $C$ of diameter $4\diam(T)$ centered at $\lift{v_1}$ by Lemma~\ref{lem:edgelength}. Let $M$ be the affine transformation that maps the lattice of the lifts of $v_2$ to the square lattice $\mathbb{Z}^2$. $M(C)$ is a convex set and from Pick's theorem~\cite{trainin},\footnote{See also \url{https://en.wikipedia.org/wiki/Pick's_theorem\#Inequality_for_convex_sets}} the number of points of $\mathbb{Z}^2$ in $M(C)$ is smaller than $\mbox{area}(M(C)) + 1/2\cdot\mbox{perimeter}(M(C))+1$, which is also a bound on the number of possible points $\lift{v_2}$ in $C$ and thus the number of possible edges $e$. The area of $M(C)$ is $1/A_h\cdot\mbox{area(C)}$ since $det(M)=1/A_h$, but there is no simple formula for its perimeter. As already mentioned in the proof of Theorem~\ref{th:flip-hyp}, an edge can never reappear after it was flipped. Moreover, there are $n^2/2$ sets of points $\{v_1,v_2\}$ ($v_1$ and $v_2$ may be the same point), which yields the result.
\end{proof}

The rest of this section is devoted to computing the number of edges not longer than $2\diam(T)$ between two fixed points $v_1$ and $v_2$ on a hyperbolic surface $(S,h)$. Counting the number of points in a disk of fixed radius would give an exponential bound because the area of a circle in $\Hm^2$ is exponential in its radius~\cite{m-aetim-69}. Note that we only consider geodesic edges, so we only need to count homotopy classes of simple paths. The behavior of the number $N_l$ of simple closed curves smaller than a fixed length $l$ is well understood: $N_l/l^{6g-6}$ converges to a positive constant depending ``continuously'' on $h$~\cite{m-gnscg-08}. However, we need a result for geodesic paths instead of geodesic closed curves, and Mirzakhani's proof is too deep and relies on too sophisticated structures to easily be generalized. So, we will only prove an upper bound on the number of paths. Such an upper bound could be derived from the theory of measured laminations of Thurston, which is also quite intricate. Fortunately, a more comprehensible proof, specific to simple closed geodesic curves on hyperbolic structures, can be found in a book published by the French Mathematical Society~\cite[4.III, p.61-67]{flp-tts-79}\cite{flp-tws-12}. While recalling the main steps of the proof, we show how to extend it to geodesic paths.

Let $\Gamma=\{\gamma, i=1,\ldots,3g-3\}$ be a set of $3g-3$ simple disjoint closed geodesics on $(S,h)$ not containing $v_1$ and $v_2$ that forms a pants decomposition on $S$, where each $\gamma_i$ belongs to two different pairs of pants. A set  $\{\overline{\gamma_i}, i=1,\ldots,3g-3\}$ of disjoint closed annuli is defined on $S$, where each $\overline{\gamma_i}$ is a tubular neighborhood of $\gamma_i$ containing none of $v_1,v_2$. This yields a decomposition of $S$ into $3g-3$ annuli $\overline{\gamma_i} (i=1,\ldots,3g-3)$ and $2g-2$ pairs of  ``short pants'' $P_j (j=1,\ldots,2g-2)$. For $i=1,\ldots,3g-3$, let us denote as $\partial\overline{\gamma_i}$ any one of the two curves bounding the annulus $\overline{\gamma_i}$ (this is an abuse of notation but should not introduce any confusion).
In each pair of pants $P_j, j=1,\ldots,2g-2$, for each boundary $\partial\overline{\gamma}$, an arc $J^\gamma_i$ is drawn in $P_i$, going from the boundary of $\overline{\gamma}$ to itself that separates the other two boundaries of $P_i$ and that has minimal length. 

Two curves $\gamma'$ and $\gamma''$ are associated to each $\gamma\in\Gamma$ in the following way (Figure~\ref{fig:pantalons}). The annulus $\overline{\gamma}$ is glued with the two pairs of pants $P_i$ and $P_j$ between which it is lying, which yields a sphere with four boundaries: $\partial\overline{\gamma_{i,1}}$ and $\partial\overline{\gamma_{i,2}}$ bounding  $P_i$ and $\partial\overline{\gamma_{j,1}}$ and $\partial\overline{\gamma_{j,2}}$ bounding $P_j$.
\begin{figure}[htb]
    \centering
    \includegraphics{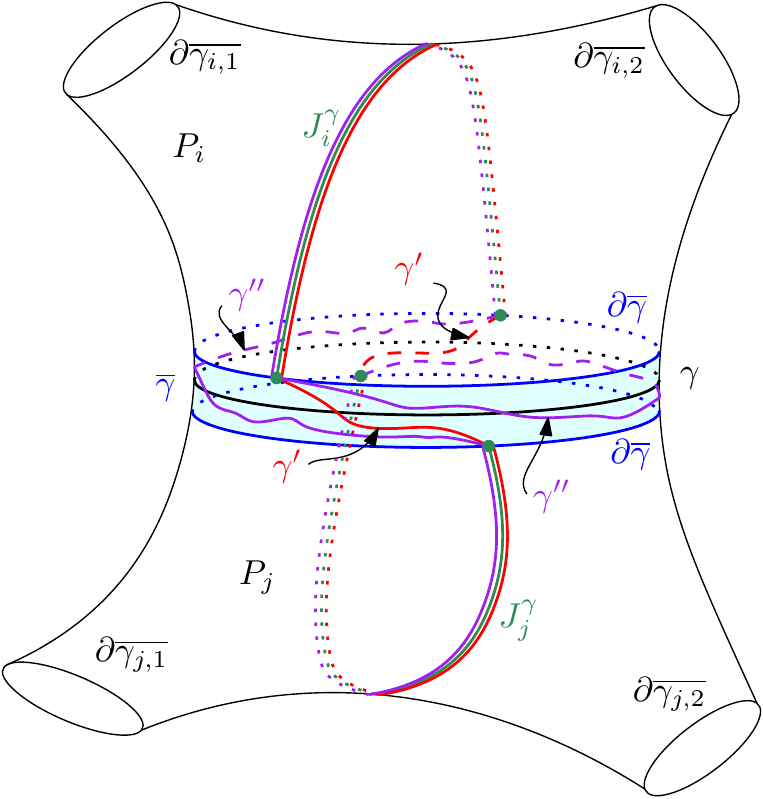}
    \caption{Two adjacent pairs of pants $P_i$ and $P_j$.}
    \label{fig:pantalons}
\end{figure}
A curve $\gamma'$ is then defined: it coincides with $J^\gamma_i$ in $P_i$ and $J^\gamma_j$ in $P_j$, it separates $\partial\overline{\gamma_{i,1}}$ and $\partial\overline{\gamma_{j,1}}$ from $\partial\overline{\gamma_{i,2}}$ and $\partial\overline{\gamma_{j,2}}$, and it has exactly 2 crossings with $\gamma$. The curve $\gamma''$ is defined in the same way, separating $\partial\overline{\gamma_{i,1}}$ and $\partial\overline{\gamma_{j,2}}$ from $\partial\overline{\gamma_{i,2}}$ and $\partial\overline{\gamma_{j,1}}$. 

For each $P_i$ and $m_{i,1},m_{i,2},m_{i,3}\in\mathbb{N}$, a model multiarc is fixed in $P_i$, having $m_{i,1}$, $m_{i,2}$ and $m_{i,3}$ intersections with the three boundaries $\partial\overline{\gamma_{i,1}}$, $\partial\overline{\gamma_{i,2}}$ and $\partial\overline{\gamma_{i,3}}$ of $P_i$ (if one exists). The model is chosen among all the possible model multiarcs as the one that has a minimal number of intersections with the three arcs $J^{\gamma_{i,j}}_i (j=1,2,3)$ of $P_i$.
The model multiarcs is unique, up to homeomorphisms of the pair of pants, and those homeomorphisms are rather simple to understand since they can be decomposed into three Dehn twists around curves homotopic to the three boundaries of the pair of pants. 

Let now $f$ be a path between $v_1$ and $v_2$ on $S$. We decompose $f$ into three parts: $(v_1,w_1)$, $(w_1,w_2)$ and $(w_2,v_2)$ where $w_1$ and $w_2$ are the first and the last point of $f$ on an annulus boundary. 
We ``push'' all the twists of $f^w$ into the annuli $\overline{\gamma}, \gamma\in\Gamma$, and obtain a \emph{normal form} homotopic to $f$, whose definition adapts the definition given in the book~\cite{flp-tts-79}
for closed curves: 
\begin{enumerate}
	\item It is simple.
	\item It has a minimal number $m_i$ of intersections with each $\gamma_i, i=1,\ldots,3g-3$.
	\item In each $P_j, j=1,\ldots,2g-2$, it is homotopic with fixed endpoints to the model multiarc that corresponds to the number of intersections with its boundaries. For $P_{j_1}$  (resp.\ $P_{j_2}$) containing $v_1$ (resp.\ $v_2$), only the intersections different from $w_1$ (resp.\ $w_2$) are counted.
	\item Between $v_1$ and $w_1$ (resp.\ $w_2$ and $v_2$), it has a minimal number of intersections with the three arcs $J^{\gamma_{j_1,k}}_{j_1} (k=1,2,3)$ in $P_{j_1}$ containing $v_1$ (resp.\ $J^{\gamma_{j_2,k}}_{j_2}$ in $P_{j_2}$ containing $v_2$).
	\item It has a minimal number $t_i$ of intersections with $\gamma'_i$ inside $\overline{\gamma_i}$, for any $i=1,\ldots,3g-3$.
	\item It has a minimal number $s_i$ of intersections with $\gamma''_i$ inside $\overline{\gamma_i}$, for any $i=1,\ldots,3g-3$.
\end{enumerate}  
The existence of a normal form is clear but its uniqueness is unclear (uniqueness is not required for the upper bound that we are looking for, but it can actually be proved by extension of the next lemma). 
The two forms of the path $f$ are used to define two notions of complexity: its geodesic form is used to define its length, which can be seen as a geometric complexity, whereas its its \emph{normal coordinates} $m_i$, $s_i$ and $t_i$ can be seen as a combinatorial complexity. Lemma~\ref{lem:pathlength} shows some equivalence between the two notions of complexity. We first show that a fixed set of coordinates corresponds to a finite number of possible non-homotopic paths.

\begin{lemma}\label{lem:paths}
For any set of coordinates $m_i, t_i, s_i, i=1,\ldots,3g-3$, there are at most $9(\max_{\{i=1,\ldots,3g-3\}}(m_i))^2$ non-homotopic normal forms.
\end{lemma}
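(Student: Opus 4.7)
The plan is to read the normal form off piece by piece from the decomposition of $S$ into annuli $\overline{\gamma_i}$ and pairs of pants $P_j$, and to count the freedom that remains once the coordinates $(m_i,t_i,s_i)$ are fixed. I expect essentially all of that freedom to concentrate at the two pairs of pants $P_{j_1}$ and $P_{j_2}$ that contain the endpoints $v_1$ and $v_2$, while the rest of the surface is rigid.

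First I would show that inside each annulus $\overline{\gamma_i}$ the restriction of a simple normal form is uniquely determined by $(m_i,t_i,s_i)$: since the path is simple and in normal form, its intersection with $\overline{\gamma_i}$ is a union of $m_i$ disjoint simple arcs joining the two boundary components, and the twist counts $t_i$, $s_i$ from conditions (5)--(6) fix how the endpoints on the two sides are cyclically paired. Next, in any pair of pants $P_j$ that contains neither endpoint, condition (3) forces the restriction of $f$ to be isotopic rel boundary to the prescribed model multiarc, which depends only on the three boundary intersection numbers, themselves read off from the $m_i$'s along the three annuli adjacent to $P_j$. Hence these pieces are rigid as well.

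The only remaining flexibility then lies in $P_{j_1}$ and $P_{j_2}$. In $P_{j_1}$ the restriction of $f$ is a model multiarc (determined by the boundary numbers excluding $w_1$) plus one extra arc from $v_1$ to the entry point $w_1$. Once $w_1$ is fixed, condition (4) of minimal intersection with the three arcs $J^{\gamma_{j_1,k}}_{j_1}$ pins down the isotopy class of this extra arc, by an adaptation of the pair-of-pants lemma of \cite{flp-tts-79}: a pair of pants carries only finitely many isotopy classes of simple arcs with prescribed endpoint data, and the minimality condition singles out a unique one. The genuine choice in $P_{j_1}$ is therefore the choice of $w_1$, and likewise for $w_2$ in $P_{j_2}$.

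To finish, I would count: the point $w_1$ lies on one of the $3$ boundary components of $P_{j_1}$, and on a given boundary component $\partial\overline{\gamma_i}$ it is one of the $m_i$ intersection points of $f$ with that curve (each strand of the annulus $\overline{\gamma_i}$ contributes exactly one endpoint to this side). That gives at most $3\max_i m_i$ candidates for $w_1$ and, independently, the same bound for $w_2$, for a total of at most $9(\max_i m_i)^2$ non-homotopic normal forms with the prescribed coordinates. The main obstacle I anticipate is precisely the pants-lemma step for $P_{j_1}$: one must rigorously confirm that once $w_1$ is chosen, conditions (1)--(4) leave no residual ambiguity in how the model multiarc and the endpoint arc $(v_1,w_1)$ are laid out with respect to each other. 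This is an extension of the FLP analysis from closed simple multicurves to multiarcs with one marked interior endpoint, and it should go through by explicit enumeration of the finitely many arc types in a pair of pants.
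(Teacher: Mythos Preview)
Your proposal is correct and follows essentially the same route as the paper: both arguments invoke the FLP pants lemma to conclude that the coordinates $(m_i,t_i,s_i)$ rigidify the normal form everywhere except in the two pairs of pants $P_{j_1}$ and $P_{j_2}$ containing the endpoints, and then count the at most $3\max_i m_i$ choices for each of $w_1$ and $w_2$ to obtain the bound $9(\max_i m_i)^2$. The obstacle you flag---that fixing $w_1$ removes all residual ambiguity in $P_{j_1}$---is exactly what the paper asserts (``$w_1$ and $w_2$ are fixing unique models''), with the justification left to the FLP-style enumeration you describe.
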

\begin{proof} Let $f$ be a path, decomposed as above into $(v_1,w_1)$, $f^w=(w_1,w_2)$ and $(w_2,v_2)$. The uniqueness for closed curves comes from the facts that in each pair of pants, fixing the $m_i$, $s_i$ and $t_i$ leads to a unique homotopy class of model multiarcs~\cite[Lemma 5, p.63]{flp-tts-79}. 
Everything remains true but the uniqueness of the homotopy class of model multiarcs in the two (not necessarily different) pairs of pants $P_{j_1}$ and $P_{j_2}$ containing $v_1$ and $v_2$. However, $w_1$ and $w_2$ are fixing unique models (see Figure~\ref{fig:pathVScurve}). 
	\begin{figure}[htb]
		\centering
		\def\svgwidth{\columnwidth}
		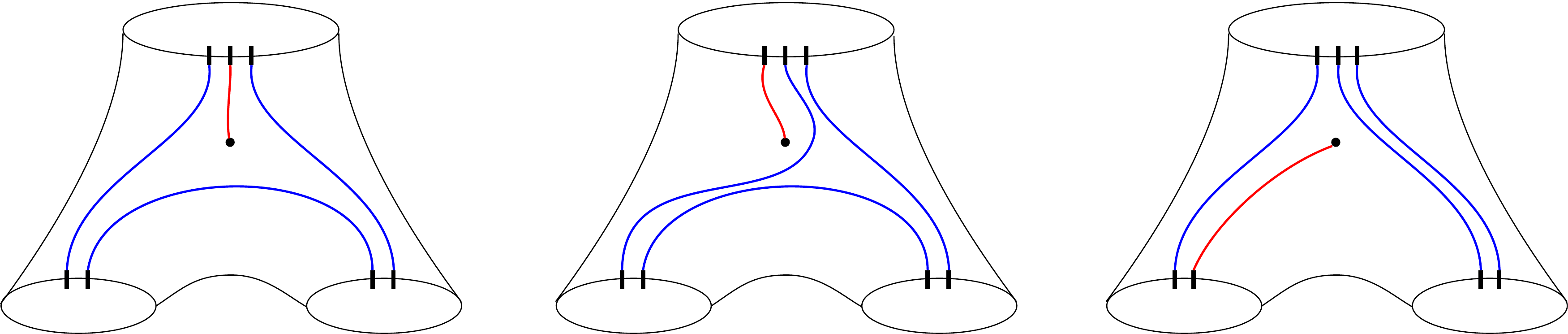
		\caption{Three possible choices for $w_1$. The two left choices correspond to the same model multiarcs, but the orderings on the upper boundary lead to non-homotopic paths. The right choice leads to different model multiarcs.}
		\label{fig:pathVScurve}
	\end{figure}
	There are three possible annulus boundaries $\partial\overline{\gamma_{j,i}}, i=1,2,3$ for $w_1$ in the pair of pants $P_j$ that contains $v_1$ (resp.\ $\overline{\gamma_{j,i}}$ for $w_2$), so, at most $3\max_{\{i\}}(m_i)$ possibilities for each of them. The choices for $w_1$ and $w_2$ are independent and the result follows.
\end{proof}

\begin{lemma}\label{lem:pathlength}
	Let $f$ be a geodesic path of length $l$, then there exists a constant $c_h$ such that the coordinates $m_i, t_i$ and $s_i, i=1,\ldots,3g-3$ of the normal form of $f$ are smaller than $c_h\cdot l$.
\end{lemma}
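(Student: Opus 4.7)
I would derive each bound from the geometric intersection number of $f$ with an appropriate simple closed geodesic on $(S,h)$, and then control each of these intersection numbers linearly in $l$ by means of the Collar Lemma of hyperbolic geometry.

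For the bound on $m_i$: by construction of the normal form, $m_i$ is the minimal number of intersections with $\gamma_i$ in the (endpoint-fixed) isotopy class of $f$; since $f$ itself is the unique geodesic representative of that class and $\gamma_i$ is a simple closed geodesic, the standard fact that geodesics realize the geometric intersection number implies that $m_i$ equals the number of transverse intersections of the geodesic arcs $f$ and $\gamma_i$. The Collar Lemma now provides an embedded tubular annulus around each $\gamma_i$ of width $w_i>0$ depending only on $h$; equivalently, any two distinct lifts of $\gamma_i$ in $\Hm^2$ are at hyperbolic distance at least $2w_i$. Lifting $f$ to a geodesic arc $\lift{f}\subset \Hm^2$ of length $l$, consecutive crossings of $\lift{f}$ with distinct lifts of $\gamma_i$ are separated along $\lift{f}$ by at least $2w_i$, yielding $m_i\leq 1 + l/(2w_i)$.

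For the bounds on $t_i$ and $s_i$: the auxiliary curves $\gamma'_i$ and $\gamma''_i$ are simple closed curves on $S$ by construction, and I would replace them by their (unique) geodesic representatives $\bar{\gamma}'_i,\bar{\gamma}''_i$ on $(S,h)$, which are again simple closed geodesics. Repeating the Collar Lemma argument with each of these gives upper bounds of the form $c'_h\cdot l$ on the global geometric intersection numbers $i(f,\bar{\gamma}'_i)$ and $i(f,\bar{\gamma}''_i)$. Since $t_i$ (resp.\ $s_i$) counts only intersections with $\gamma'_i$ (resp.\ $\gamma''_i$) that happen \emph{inside} $\overline{\gamma_i}$, and the minimal local count is at most the total geometric intersection number with the geodesic representative, we obtain $t_i\leq i(f,\bar{\gamma}'_i)\leq c'_h\cdot l$ and similarly for $s_i$. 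Taking $c_h$ to be the maximum over the $3g-3$ curves $\gamma_i$ (and their duals) of the constants $1/(2w_i)$ and $c'_h$, and absorbing the additive $+1$, gives the statement.

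The main obstacle, in my view, is not any individual estimate but the identification of the combinatorial normal-form coordinates $m_i,t_i,s_i$ with geometric intersection numbers of the geodesic $f$ with the chosen closed geodesics. For $m_i$ this is essentially tautological once one invokes geodesic realization of intersection numbers for arcs with fixed endpoints. For $t_i$ and $s_i$ one has to justify that pushing the normal form of $f$ to its geodesic representative does not lose any intersection localized inside the annulus $\overline{\gamma_i}$; this reduces to the standard bigon-type argument, namely that a homotopy between two simple geodesic arcs of $\Hm^2$ cannot create or cancel transverse crossings. Once this is settled, the rest is an essentially routine application of the Collar Lemma, and the constant $c_h$ depends only on the collar widths and lengths of the finitely many closed geodesics $\gamma_i$, $\bar{\gamma}'_i$, $\bar{\gamma}''_i$, hence only on $h$.
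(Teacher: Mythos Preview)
Your proposal is correct and follows essentially the same route as the paper: both bound each coordinate by the intersection number of the geodesic $f$ with a fixed simple closed geodesic (namely $\gamma_i$ for $m_i$, and the geodesic representatives of $\gamma'_i,\gamma''_i$ for $t_i,s_i$), and then use a collar/tubular-neighborhood estimate to turn that intersection number into a linear bound in $l$. The paper cites Birman--Series for the inequality $l\geq \varepsilon_\delta(k_\delta-1)$ rather than invoking the Collar Lemma by name, but the geometric content is identical; if anything, you are slightly more explicit than the paper in flagging why the annulus-local counts $t_i,s_i$ are dominated by the global geometric intersection numbers with the geodesic representatives.
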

\begin{proof}
	For any simple closed geodesic $\delta$ on $S$, the geodesic form of $f$ intersects $\delta$ in a minimal number $k_\delta$ of points, since they are both geodesics. If $\varepsilon_\delta$ is the width of a tubular neighborhood of $\delta$, 
then $l\ge \varepsilon_\delta(k_\delta-1)$~\cite[Lemma 3.1]{bs-1985}. Each coordinate $m_i, t_i$ and $s_i$ of $f$ corresponds to the minimal number of intersections with a curve. The number $m_i$ corresponds to $\gamma_i$. The number $t_i$ is actually not larger 
than the number of intersections of $f$ with the geodesic curve that is homotopic to $\gamma'_i$ ($\gamma'_i$ is generally not geodesic), and similarly $s_i$ is not larger than the number of intersections of $f$ with the geodesic homotopic to $\gamma''_i$. 
These curves $\gamma_i,\gamma'_i,\gamma''_i$ only depend on $(S,h)$, so, we can take $\varepsilon_h$ to be the largest of all the $9g-9$ widths  $\varepsilon_{\gamma_i}, \varepsilon_{\gamma'_i}, \varepsilon_{\gamma''_i}$  and we obtain $l\ge\varepsilon_h\cdot\max(m_i,t_i,s_i)$ and thus $\max(m_i,t_i,s_i)\le1/\varepsilon_h\cdot l$.
\end{proof}

\begin{theorem} \label{tm:bound-hyp}
  For any hyperbolic structure $h$ on $S$ and any triangulation $T$ of $(S,h)$, there is a sequence of flips of length at most $C_h\cdot \diam(T)^{6g-4}\cdot n^2$ in the geometric flip graph connecting $T$ to a Delaunay triangulation of $(S,h)$. 
\end{theorem}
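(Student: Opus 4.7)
The plan is to follow the outline of the proof of Theorem~\ref{tm:bound-torus}, replacing the count of lattice points in a Euclidean disk (via Pick's theorem) by a count of homotopy classes of simple paths on $(S,h)$ using the normal-form machinery developed in Lemmas~\ref{lem:paths} and~\ref{lem:pathlength}.

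First, I would set up the bookkeeping. By Lemma~\ref{lem:edgelength} every edge appearing during the flip algorithm has length at most $2\diam(T)$, and, as observed in the proof of Theorem~\ref{th:flip-hyp}, a flipped edge never reappears. So the total number of flips is at most the number of pairs $\{v_1,v_2\}\subseteq V$ (with $v_1=v_2$ allowed) times the number of homotopy classes of simple paths from $v_1$ to $v_2$ whose geodesic representative has length at most $2\diam(T)$. The number of such pairs is $O(n^2)$.

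Next, I would fix such a pair $(v_1,v_2)$ and put each candidate geodesic path in its normal form. By Lemma~\ref{lem:pathlength}, the coordinates $m_i,t_i,s_i$ ($i=1,\ldots,3g-3$) satisfy $\max(m_i,t_i,s_i)\le 2c_h\diam(T)$; and by Lemma~\ref{lem:paths}, for any fixed coordinate tuple there are at most $9(\max_i m_i)^2=O(\diam(T)^2)$ non-homotopic normal forms. So the count for a fixed pair is at most (number of admissible coordinate tuples)\,$\times\,O(\diam(T)^2)$.

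The main step, and the only non-routine one, is to argue that the number of admissible tuples $(m_i,t_i,s_i)_{i=1,\ldots,3g-3}$ is $O(\diam(T)^{6g-6})$, rather than the $O(\diam(T)^{9g-9})$ one would get from treating all $9g-9$ coordinates as independent. This is the standard dimension count for Dehn--Thurston coordinates: the two twist coordinates $t_i$ and $s_i$ in the annulus $\overline{\gamma_i}$ encode a single integer-valued twist, and once $m_i$ is fixed, knowing $t_i$ determines $s_i$ up to a bounded number of possibilities. The effective number of independent coordinates is therefore $2(3g-3)=6g-6$, each bounded by $2c_h\diam(T)$, yielding the $O(\diam(T)^{6g-6})$ count of admissible tuples. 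Combining this with the $O(\diam(T)^2)$ factor from Lemma~\ref{lem:paths} and summing over the $O(n^2)$ vertex pairs gives the total bound $C_h\cdot\diam(T)^{6g-4}\cdot n^2$. The real obstacle in the argument is precisely this reduction from $9g-9$ to $6g-6$: a careless bound would give a correct but much weaker exponent, and extracting the sharp dimension from the redundancy built into the three-coordinate normal form (as used in the paper instead of the more classical Dehn--Thurston pair $(m_i,t_i)$) is where the geometric input to the counting really lies.
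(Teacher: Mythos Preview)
Your proposal is correct and follows essentially the same route as the paper. The only place where the paper is more concrete is the dimension reduction from $9g-9$ to $6g-6$: rather than invoking the general Dehn--Thurston dimension count, the paper uses the explicit relation from \cite[Lemma~6, p.~64]{flp-tts-79} that for each $i$ one of $m_i=t_i+s_i$, $t_i=m_i+s_i$, $s_i=m_i+t_i$ holds, so fixing $(m_i,t_i)$ leaves at most three values for $s_i$; this is exactly your ``$t_i$ determines $s_i$ up to a bounded number of possibilities'' made precise.
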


\begin{proof}
 Let $N_{v_1,v_2}$ be the number of paths from $v_1$ to $v_2$ shorter than $l=2\cdot\diam(T)$. From the previous lemma, we obtain that the $9g-9$ coordinates $m_i,t_i$, and $s_i$ of any such path $f$ are smaller than $c_h\cdot2\diam(T)$. It appears that, $\forall i, m_1=t_i+s_i$, $t_i=m_i+s_i$ or $s_i=m_i+t_i$~\cite[Lemma 6, p.64]{flp-tts-79}. So, if we fix $m_i$ and $t_i$ there are at most 3 possible $s_i$. Lemma~\ref{lem:paths} and~\ref{lem:pathlength} proves that there are $9(c_h\cdot 2\diam(T))^2$ potential paths for each coordinate set. We obtain a bound for $N_{v_1,v_2}$: $N_{v_1,v_2}\leq 9(c_h\cdot 2\diam(T))^2\cdot 3(c_h\cdot 2\diam(T))^{6g-6}$ and thus, there is a constant $C'_h$ such that $N_{v_1,v_2}\le C'_h\cdot\diam(T)^{6g-4}$. Since there are $1/2\cdot n^2$ possible sets $\{v_1,v_2\}$, we obtain the bound on the number of edges. 
\end{proof}


\newcommand{\etalchar}[1]{$^{#1}$}

\end{document}